\title{An optimal oracle separation of classical and quantum hybrid schemes} 
\titlerunning{An optimal oracle separation of classical and quantum hybrid schemes} 
\author{Atsuya Hasegawa}{Graduate School of Information Science and Technology, The University of Tokyo, Japan}
{atsuyahasegawa@is.s.u-tokyo.ac.jp}{}{}
\author{Fran{\c{c}}ois Le Gall}{Graduate School of Mathematics, Nagoya University, Japan}{legall@math.nagoya-u.ac.jp}{}{}
\authorrunning{A.\,Hasegawa and F.\,Le Gall} 
\keywords{small-depth quantum circuit, hybrid quantum computer, oracle separation} 
\begin{document}

\maketitle

\begin{abstract}
Recently, Chia, Chung and Lai (STOC 2020) and Coudron and Menda (STOC 2020) have shown that there exists an oracle $\mathcal{O}$ such that $\mathsf{BQP}^\mathcal{O} \neq (\mathsf{BPP^{BQNC}})^\mathcal{O} \cup (\mathsf{BQNC^{BPP}})^\mathcal{O}$. In fact, Chia et al. proved a stronger statement: for any depth parameter $d$, there exists an oracle that separates quantum depth $d$ and $2d+1$, when polynomial-time classical computation is allowed. This implies that relative to an oracle, doubling quantum depth gives classical and quantum hybrid schemes more computational power.

In this paper, we show that for any depth parameter $d$, there exists an oracle that separates quantum depth $d$ and $d+1$, when polynomial-time classical computation is allowed. This gives an optimal oracle separation of classical and quantum hybrid schemes. To prove our result, we consider $d$-Bijective Shuffling Simon's Problem (which is a variant of $d$-Shuffling Simon's Problem considered by Chia et al.) and an oracle inspired by an ``in-place'' permutation oracle.
\end{abstract}

\section{Introduction}
\paragraph*{Background.}
In recent years, the development of quantum computers has been very active (see, e.g., \cite{listqp} for information about current quantum computers) and ``quantum supremacy'' has been claimed \cite{aab+19,zhong2020quantum}. However, it is still difficult to implement large-depth quantum circuits with current quantum technology since such quantum devices are subjective to noise and have short coherent time. One potential way to extract the computational powers of such quantum devices is to consider a hybrid scheme combining them with classical computers. For example, variational quantum algorithms are considered in such a scheme to obtain quantum advantage (see \cite{cab+21} for a survey).

Therefore, understanding the capabilities and limits of this hybrid approach is an essential topic in quantum computation. As one of the most notable results, Cleve and Watrous \cite{cw00} showed the quantum Fourier transformation can be implemented by combining logarithmic-depth quantum circuits with a classical polynomial-time algorithm. With the possibility to implement Shor's algorithm in such a hybrid scheme and the developments of measurement-based quantum computation, Jozsa \cite{joz05} conjectured that “Any quantum polynomial-time algorithm can be implemented with only $O(\log{n})$ quantum depth interspersed with polynomial-time algorithm classical computations”. This can be formalized as $\mathsf{BQP}$ = $\mathsf{BQNC^{BPP}}$. On the other hand, Aaronson \cite{aar05,aar10,aar11} conjectured “there exists an oracle separation between $\mathsf{BQP}$ and $\mathsf{BPP^{BQNC}}$”. $\mathsf{BPP^{BQNC}}$ is a complexity class recognized by a polynomial classical scheme which have access to poly-logarithmic depth quantum circuits. $\mathsf{BQNC^{BPP}}$ and $\mathsf{BPP^{BQNC}}$ are sets of problems recognized by two natural and seemingly incomparable models of hybrid classical and quantum computation.

Recent works by Chia, Chung and Lai \cite{ccl20} and Coudron and Menda \cite{cm20} proved Aaronson’s conjecture and refuted Jozsa’s conjecture in a relativized setting. Interestingly, computational problems and oracles they considered were completely different. Coudron and Menda \cite{cm20} considered, as an oracle problem, the Welded Tree Problem which exhibits a difference between quantum walks and classical random walks: this problem can be solved efficiently by a quantum algorithm \cite{ccd+03} but, in the classical setting, exponential queries are required \cite{ccd+03,fz03}. To prove a lower bound of classical and quantum hybrid schemes, Coudron and Menda introduced ``Information Bottleneck'' to simulate classical and quantum hybrid schemes with fewer classical queries. They showed if we assume the hybrid schemes solve the problem, we reach a contradiction with the lower bound of classical queries from \cite{fz03}.

Chia, Chung and Lai \cite{ccl20} considered $d$-Shuffling Simon’s Problem, which is a variant of Simon’s Problem \cite{sim97}. Since Simon's Problem can be solved with a constant-depth quantum circuit with classical post-processing, we cannot prove the hardness for classical and quantum hybrid schemes. To devise a harder problem, they combine Simon's function with sequential random permutations: for a Simon's function $f$, they consider random one-to-one functions $f_0,...,f_{d-1}$ and two-to-one function $f_d$ such that $f = f_d \circ \cdot \cdot \cdot \circ f_0$. They also hide the domains of the functions in larger domains and apply the idea of the Oneway-to-Hiding (O2H) lemma \cite{ahu18,unr15} to prove the hardness. In fact, they proved a stronger statement below.

\begin{theorem}[\cite{ccl20}]\label{th:ccl20}
  For any $d \in \mathbb{N}$, there exists an oracle $\mathcal{O}$ such that
  \begin{equation}\nonumber
      (\mathsf{BQNC}_{d}^{\mathsf{BPP}})^\mathcal{O} \cup (\mathsf{BPP^{BQNC_{d}}})^\mathcal{O} \neq (\mathsf{BQNC}_{2d+1}^{\mathsf{BPP}})^\mathcal{O} \cap (\mathsf{BPP^{BQNC_{2d+1}}})^\mathcal{O}.
  \end{equation}
\end{theorem}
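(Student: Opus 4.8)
The plan is to exhibit an oracle distribution encoding a secret ``Simon period'' that is easy to recover with quantum depth $2d+1$ but hidden from every depth-$d$ hybrid scheme, and then to fix a single oracle by a union-bound and diagonalization argument. Concretely, I would define the $d$-Shuffling Simon's Problem via an oracle $\mathcal{O}$ that encodes a chain of random one-to-one functions $f_0,\dots,f_{d-1}$ together with a two-to-one Simon function $f_d$ with hidden period $s$, where each $f_i$ is given as an XOR-type oracle whose small relevant domain is embedded inside an exponentially larger range (domain hiding), so that $f=f_d\circ\cdots\circ f_0$. The task is to output $s$ with bounded error.

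For the upper bound I would place the problem in $(\mathsf{BQNC}_{2d+1}^{\mathsf{BPP}})^\mathcal{O}\cap(\mathsf{BPP^{BQNC_{2d+1}}})^\mathcal{O}$ by running Simon's algorithm on the composed function $f$. The only quantum-intensive step is a single coherent evaluation of $f$: starting from a uniform superposition over the domain of $f_0$, I would apply the oracles $f_0,\dots,f_d$ in sequence ($d+1$ layers) and then uncompute the $d$ intermediate registers by re-applying $f_0,\dots,f_{d-1}$ in reverse order ($d$ further layers), for a total of $2d+1$ oracle layers; a final Hadamard layer and measurement yields a vector orthogonal to $s$. Repeating this $\mathrm{poly}(n)$ times and solving the resulting linear system over $\mathbb{F}_2$ in classical polynomial time recovers $s$. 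Since all the quantum work fits in depth $2d+1$ and the remainder is classical, the problem lies in both hybrid classes at that depth.

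The heart of the argument, and the main obstacle, is the lower bound: no depth-$d$ hybrid scheme in either ordering, $\mathsf{BQNC}_d^{\mathsf{BPP}}$ or $\mathsf{BPP^{BQNC_d}}$, can find $s$. The intuition is that evaluating $f$ requires coherently chaining through all $d+1$ permutation layers, yet a depth-$d$ quantum circuit can thread at most $d$ of them before it must halt and report to the classical machine, while the interspersed classical computation is useless because a $\mathrm{poly}(n)$-query classical algorithm almost never hits the hidden relevant domains. To formalize this I would run a hybrid argument over the $d+1$ layers: using domain hiding to puncture each inner oracle at its relevant preimages and invoking the One-way-to-Hiding lemma, I would bound the advantage a bounded-depth circuit can gain at each layer by the query amplitude it places on the punctured points, and argue this amplitude stays negligible because the adversary cannot coherently reach the $(d+1)$-st layer. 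Summing the per-layer bounds shows that $s$ remains information-theoretically hidden, so the success probability cannot exceed $1/2+\mathrm{negl}(n)$.

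Finally I would convert this average-case hardness into a fixed oracle by enumerating all polynomial-time hybrid machines of each type, applying the lower bound together with a union bound over machines and input lengths, and diagonalizing to fix one $\mathcal{O}$ on which every depth-$d$ hybrid machine fails while the depth-$(2d+1)$ algorithm still succeeds. The delicate point throughout is tracking how the quantum depth budget interacts with the classical queries in the O2H accounting, namely ensuring that classical polynomial-time access between shallow quantum segments never effectively increases the scheme's coherent reach beyond $d$ layers; this is where I expect the most technical care to be required.
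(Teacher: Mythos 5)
Your proposal is correct and follows essentially the same route as the cited source: this theorem is quoted from Chia, Chung and Lai \cite{ccl20} rather than proved in the present paper, and your sketch---the $d$-Shuffling Simon's Problem with domain hiding, the $2d+1$-layer compute-then-uncompute Simon's algorithm for the upper bound, and the hidden-set/O2H hybrid argument for the depth-$d$ lower bound---is exactly the strategy of \cite{ccl20} that this paper summarizes and later adapts. The only technical point you flag but do not resolve (handling the non-uniformity of the shuffling conditioned on intermediate measurement outcomes in the CQ scheme) is indeed where \cite{ccl20} needs an extra idea, namely approximating the conditional distribution by a convex combination of almost-uniform shufflings.
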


\paragraph*{Description of our result.}
In this paper, we improve Theorem \ref{th:ccl20} above and show the following result.

\begin{theorem}\label{th:main}
  For any $d \in \mathbb{N}$, there exists an oracle $\mathcal{O}$ such that
  \begin{equation}\nonumber
      (\mathsf{BQNC}_{d}^{\mathsf{BPP}})^\mathcal{O} \cup (\mathsf{BPP^{BQNC_{d}}})^\mathcal{O} \neq (\mathsf{BQNC}_{d+1}^{\mathsf{BPP}})^\mathcal{O} \cap (\mathsf{BPP^{BQNC_{d+1}}})^\mathcal{O}.
  \end{equation}
\end{theorem}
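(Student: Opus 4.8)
The plan is to isolate a single relational problem---the $d$-\emph{Bijective Shuffling Simon's Problem}---and to show that it lies strictly between hybrid quantum depth $d$ and $d+1$. Following the template of Chia et al.~\cite{ccl20}, I would build a Simon function $f=f_d\circ\cdots\circ f_0$ in which the shuffling layers $f_0,\dots,f_{d-1}$ are now random \emph{bijections} (rather than mere injections) and $f_d$ is the two-to-one layer carrying the hidden period $s$; as in~\cite{ccl20}, the admissible domain $S_0$ of $f_0$ is planted in an exponentially larger ambient set, and each hidden set $S_{i+1}=f_i(S_i)$ is revealed only through the previous layer. The decisive change is that every bijective layer is presented as an \emph{in-place} permutation oracle $\ket{x}\mapsto\ket{f_i(x)}$ instead of a standard oracle $\ket{x}\ket{y}\mapsto\ket{x}\ket{y\oplus f_i(x)}$---this is the ``in-place'' oracle of the abstract. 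Theorem~\ref{th:main} then follows from two claims: (i) the problem is solvable in hybrid quantum depth $d+1$, so it belongs to $(\mathsf{BQNC}_{d+1}^{\mathsf{BPP}})^{\mathcal{O}}\cap(\mathsf{BPP}^{\mathsf{BQNC}_{d+1}})^{\mathcal{O}}$; and (ii) no depth-$d$ hybrid scheme of either flavour solves it, so it lies outside $(\mathsf{BQNC}_{d}^{\mathsf{BPP}})^{\mathcal{O}}\cup(\mathsf{BPP}^{\mathsf{BQNC}_{d}})^{\mathcal{O}}$.

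For the upper bound (i) I would apply the in-place oracles $f_0,\dots,f_{d-1}$ in sequence to the uniform superposition over $S_0$, transporting it to the uniform superposition over $S_d$ in depth exactly $d$ \emph{with no ancilla and no uncomputation}, and then make a single query to the Simon layer $f_d$ (one further unit of depth). This is exactly where the factor-of-two saving over~\cite{ccl20} originates: with standard oracles the intermediate values $f_0(x),f_1(f_0(x)),\dots$ remain entangled with the input register and must be uncomputed before Simon's interference can occur, costing depth $\approx 2d+1$, whereas an in-place permutation oracle overwrites its register and leaves no garbage to erase. A single depth-$(d+1)$ circuit then yields one Simon constraint on $s$; running $\mathrm{poly}(n)$ copies \emph{in parallel} (which increases width, not depth) and solving the resulting linear system classically recovers $s$. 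Because the classical work sits entirely in the outer layer, the algorithm can be cast both as a depth-$(d+1)$ quantum circuit with classical post-processing and as a classical routine invoking depth-$(d+1)$ quantum subroutines, giving membership in both members of the intersection.

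For the lower bound (ii) I would argue, in the hiding paradigm of~\cite{ccl20} and via a One-way-to-Hiding lemma~\cite{ahu18}, that any scheme must uncover the hidden domains one layer at a time. A query to $f_{i+1}$ at a point outside its valid domain $S_{i+1}$ is answered by an independent portion of a random permutation and is therefore statistically uncorrelated with $s$, so the only route to the period runs through the full chain $S_0\to S_1\to\cdots\to S_d$. Traversing this chain coherently costs one unit of quantum depth per layer, so bounded quantum depth $d$ lets an adversary build the uniform superposition over at most $S_d$ but leaves no depth for the Simon query---one layer short. The classical glue available in either hybrid model does not rescue it: a classical machine can compute individual points of $S_d$ cheaply, but extracting a Simon collision from classically known points would require birthday-many ($\approx 2^{n/2}$) of them, while no classical preprocessing can synthesise the exponential-support superposition over $S_d$ that coherent Simon sampling demands. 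A hybrid argument then shows the adversary's final state is within negligible trace distance of the state produced by an oracle from which $s$ has been erased, bounding its success probability by $1/2+\mathrm{negl}(n)$; an averaging argument over the random oracle finally fixes a single $\mathcal{O}$.

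The main obstacle, and the technical core of the paper, is porting the hiding/recording machinery from standard oracles to \emph{in-place permutation} oracles. A standard oracle carries a dedicated output register on which one can mount a recording isometry and detect which inputs have been touched, whereas an in-place oracle $\ket{x}\mapsto\ket{f_i(x)}$ overwrites its register and blends queried with unqueried points, leaving no syntactic witness of a query to exploit. I would therefore need a One-way-to-Hiding-type statement tailored to in-place permutations, coupled with a depth accounting showing that the per-layer cost is \emph{exactly} one---so that depth $d$ purchases exactly $d$ layers and never the $(d+1)$-st---uniformly across the $\mathsf{BQNC}^{\mathsf{BPP}}$ and $\mathsf{BPP}^{\mathsf{BQNC}}$ models. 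Matching the clean depth-$(d+1)$ upper bound against an equally clean depth-$d$ impossibility, in place of the loose $2d+1$ of~\cite{ccl20}, rests entirely on making this in-place analysis tight.
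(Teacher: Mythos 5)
Your high-level plan coincides with the paper's (define a $d$-Bijective Shuffling Simon's Problem, use in-place oracles to save the uncomputation depth, prove the upper bound at depth $d+1$ and the lower bound at depth $d$ via a One-way-to-Hiding argument), but the proposal skips over the one obstacle that is actually the technical core, and as written the upper bound does not work. First, you propose to query "the Simon layer $f_d$" through an in-place oracle, but $f_d$ restricted to $S_d$ is two-to-one when $f$ is a Simon's function, so there is \emph{no} unitary acting as $\ket{x}\mapsto\ket{f_d(x)}$. The paper's fix---and the reason the problem is called \emph{Bijective}---is to adjoin a bit $\eta$ with $\eta(x)=\eta(x')\oplus 1$ on the two preimages, so that $x\mapsto(f_d(x),\eta(x))$ is injective and hence implementable in place. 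Nothing in your proposal plays this role. This omission then propagates: after the final query the two preimages $x$ and $x\oplus s$ carry \emph{different} $\eta$ values, which would spoil the Simon interference unless the $\eta$ qubit is folded into the final Hadamard layer and its measurement outcome used to decide whether $s\cdot j=0$ or $s\cdot j=1$; that extra step is part of the paper's algorithm and absent from yours.

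Second, your depth-$(d+1)$ circuit applies $f_0$ in place to the uniform superposition, "with no ancilla," transporting it to a superposition over $S_d$. That discards the input register $\ket{x}$, so after querying $f_d$ and measuring $f(x)$ you collapse onto $\tfrac{1}{\sqrt2}(\ket{y}+\ket{y'})$ for two unrelated points $y,y'\in S_d$ rather than onto $\tfrac{1}{\sqrt2}(\ket{x}+\ket{x\oplus s})$; Hadamarding that state yields no information about $s$. The paper therefore uses a \emph{standard} oracle for $f_0$ (keeping $\ket{x}\ket{f_0(x)}$) and in-place oracles only for $f_1,\dots,f_{d-1},f_d'$ acting on the image register. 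A smaller but related gap on the lower-bound side: the shadow oracle must also be unitary, so one cannot simply map hidden points to $\perp$; the paper keeps the domain value and adds flag bits $\xi$ and $\zeta$ for this purpose. Your instinct that the O2H machinery must be "tailored to in-place permutations" is right, but identifying the need is not the same as resolving it, and the resolutions above are exactly where the paper's contribution lies.
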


Our result implies that, relative to an oracle, increasing the quantum depth even by \emph{one} gives the hybrid schemes more computational power and it cannot be traded by combining polynomial-time classical processing.

In Theorem \ref{th:main}, quantum circuits consisting of any 1- and 2-qubit gates are considered. Indeed, we give an algorithm by $d+1$-depth quantum circuits consisting only of $\{ H,CNOT \}$ with classical processing for the upper bound (this is also the case for Theorem \ref{th:ccl20} but not mentioned in \cite{ccl20}). Therefore we also prove that, even if we are allowed to use quantum circuits consisting of a restricted gate set contains $\{ H,CNOT \}$ such as Clifford circuits, adding even one quantum depth gives the two hybrid schemes more computational power relative to an oracle.

\newpage
\paragraph*{Outline of our approach.}
Chia et al. gave the upper bound $(\mathsf{BQNC}_{2d+1}^{\mathsf{BPP}})^\mathcal{O} \cap (\mathsf{BPP^{BQNC_{2d+1}}})^\mathcal{O}$ for $d$-Shuffling Simon's Problem by an algorithm inspired by the Simon's algorithm. Since they considered a standard oracle, $U_f \ket{x} \ket{0} = \ket{x} \ket{f(x)}$, it is required to erase the information of past queries and it takes $d$-quantum depth. To eliminate the $d$-quantum depth, we propose an idea to consider an ``in-place'' permutation oracle \cite{aar02,fk18} acts as $U_f \ket{x} = \ket{f(x)}$. However, when $f$ is a Simon's function, $f_d$ on a restricted domain is also a two-to-one function and there is no unitary operator $U_{f_d}$ such that $U_{f_d}\ket{x}=\ket{f_d(x)}$. Therefore, in this paper, we consider another function $\eta$ and make the function bijective. We name the problem $d$-Bijective Shuffling Simon's Problem and show an upper bound $(\mathsf{BQNC}_{d+1}^{\mathsf{BPP}})^\mathcal{O} \cap (\mathsf{BPP^{BQNC_{d+1}}})^\mathcal{O}$. The other obstacle is, for $f_d$ and the shadows to prove the lower bounds, how to define a unitary operator that includes mappings to $\perp$ (a constant with no information). Note that this is because there exists no unitary operator $U_{\perp}$ such that $U_{\perp}\ket{x}=\ket{\perp}$. In this paper, we give a solution by keeping values on domains and considering ``flags'' on ancilla qubits. Finally we carefully tailor the Oneway-to-Hiding lemma in our quantum oracle setting and show that the similar proofs of the lower bounds also follow as \cite{ccl20}.

\paragraph*{Related work.}
Arora, Gheorghiu and Singh \cite{ags22} proved oracle separations of $(\mathsf{BQNC}_{d}^{\mathsf{BPP}})^\mathcal{O}$ and $(\mathsf{BPP^{BQNC_{d}}})^\mathcal{O}$ with respect to each other. As corollaries, they obtained sharper separations than \cite{ccl20} for each scheme. For the quantum-classical scheme, they proved an oracle separation between quantum depth $d$ and $d+1$ if the Hadamard measurements are allowed in every layer. In our result, we only need to measure qubits in the Hadamard basis in the last layer. For the classical-quantum scheme, they proved a separation between quantum depth $d$ and $d+5$ relative to what they call a stochastic oracle (which is non-unitary). Our separation is between quantum depth $d$ and $d+1$ relative to a unitary oracle.

In an independent work \cite{ch22}, Chia and Hung have also shown how to reduce the gap from $d$ versus $2d+1$ to $d$ versus $d+1$ by techniques similar to ours (they consider an oracle inspired by an ``in-place'' permutation oracle and manage to make the final function one-to-one). They also instantiate the oracle separation to construct a protocol such that a classical verifier can check if a prover has a quantum depth of at least $d+1$. 

\paragraph*{Organization of the paper.}
After giving preliminaries in Section 2, in Section 3 we define the oracle problem that we call $d$-Bijective Shuffling Simon's Problem and give the upper bound $(\mathsf{BQNC}_{d+1}^{\mathsf{BPP}})^\mathcal{O} \cap (\mathsf{BPP^{BQNC_{d+1}}})^\mathcal{O}$. In Section 4, we prove the Oneway-to-Hiding lemma for our quantum oracle and, in Section~5, we prove the lower bounds for $(\mathsf{BQNC}_d^{\mathsf{BPP}})^\mathcal{O}$ and  $(\mathsf{BPP}^{\mathsf{BQNC}_d})^\mathcal{O}$.

The main contribution of our work is to define the $d$-Bijective Shuffling Simon's Problem and give the upper bound with quantum depth $d+1$ (Section \ref{3}). The proof of the lower bound (Section \ref{5}) is very similar to \cite{ccl20} except the Oneway-to-Hiding lemma (Section \ref{4}), which has to be adapted to the quantum oracle of this paper. We are grateful to Nai-Hui Chia for discussions about this, and in particular for clarifying that all steps in the lower bound from~\cite{ccl20} remain true for our new oracle as well, with the exception of this Oneway-to-Hiding lemma.
\newpage
\section{Preliminaries}
\subsection{State distances}
Let us recall some notions about the distances of quantum states \cite{nc10}.
\begin{definition}
    For any two mixed states $\rho$ and $\sigma$,
    \begin{itemize}
        \item (Fidelity) $F(\rho,\sigma) :=  \mathrm{tr}(\sqrt{\sqrt{\rho}\sigma\sqrt{\rho}})$.
        \item (Bures distance) $B(\rho,\sigma) := \sqrt{2-2F(\rho,\sigma)}$.
    \end{itemize}
\end{definition}

\begin{claim}
    For any two mixed states $\rho$ and $\sigma$, any quantum algorithm $\mathcal{A}$ and any classical string s, 
    \begin{equation}\nonumber
        |\mathrm{Pr}[\mathcal{A}(\rho)=s]-\mathrm{Pr}[\mathcal{A}(\sigma)=s]| \leq B(\rho,\sigma).
    \end{equation}
\end{claim}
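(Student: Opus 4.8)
The plan is to reduce the statement to the well-known operational characterization of the trace distance, and then convert that bound into one phrased in terms of the fidelity $F$ (and hence the Bures distance $B$). First I would observe that running $\mathcal{A}$ on an input state and recording whether its classical output equals $s$ is a two-outcome measurement: there is an operator $M_s$ with $0 \preceq M_s \preceq I$ such that $\mathrm{Pr}[\mathcal{A}(\rho)=s] = \mathrm{tr}(M_s\rho)$ for \emph{every} input state, since the output distribution is linear in the input and invariant under the intervening unitaries and ancilla. Hence the left-hand side equals $|\mathrm{tr}(M_s(\rho-\sigma))|$.

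Next I would invoke the operational meaning of the trace distance $D(\rho,\sigma) = \tfrac12\|\rho-\sigma\|_1$: for any $0\preceq M_s\preceq I$ one has $|\mathrm{tr}(M_s(\rho-\sigma))|\le D(\rho,\sigma)$, since $M_s$ can do no better than the optimal Helstrom measurement. It then remains to compare $D$ with $B$. Here I would use the Fuchs--van de Graaf inequality $D(\rho,\sigma)\le\sqrt{1-F(\rho,\sigma)^2}$ together with the elementary estimate $1-F^2=(1-F)(1+F)\le 2(1-F)$, valid because $0\le F\le 1$. Chaining these gives $D(\rho,\sigma)\le\sqrt{2-2F(\rho,\sigma)}=B(\rho,\sigma)$, which is exactly the claimed bound.

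Alternatively, to keep the argument self-contained and to avoid citing Fuchs--van de Graaf, I would argue directly by purification. By Uhlmann's theorem choose purifications $\ket{\psi}$, $\ket{\phi}$ of $\rho,\sigma$ with $\braket{\psi|\phi}=F(\rho,\sigma)$ (real and nonnegative after fixing a global phase); then $\|\ket{\psi}-\ket{\phi}\|=\sqrt{2-2F}=B(\rho,\sigma)$, so $B$ is literally the Euclidean distance between the optimal purifications. The two probabilities become $\bra{\psi}(M_s\otimes I)\ket{\psi}$ and $\bra{\phi}(M_s\otimes I)\ket{\phi}$, and bounding the difference of these expectation values of the POVM element $M_s\otimes I$ on the pure states by the distance of the purifications reproduces the same estimate.

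I expect the only real subtlety to be obtaining the \emph{optimal} constant. A naive Cauchy--Schwarz splitting of $\bra{\psi}N\ket{\psi}-\bra{\phi}N\ket{\phi}$ loses a factor of $2$; the clean bound requires passing through the trace distance (equivalently, the optimal Helstrom measurement), after which the purely algebraic inequality $1-F^2\le 2-2F$ closes the remaining gap. Everything else --- the reduction to a single POVM element and the identification of $B$ with the distance between optimal purifications --- is routine.
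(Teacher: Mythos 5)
Your proof is correct. The paper itself states this claim without any proof, treating it as a standard fact from the literature (the surrounding definitions cite Nielsen and Chuang), so there is no argument of the authors' to compare against; your chain
\[
|\mathrm{Pr}[\mathcal{A}(\rho)=s]-\mathrm{Pr}[\mathcal{A}(\sigma)=s]| = |\mathrm{tr}(M_s(\rho-\sigma))| \le D(\rho,\sigma) \le \sqrt{1-F^2} \le \sqrt{2-2F} = B(\rho,\sigma)
\]
is the standard route and supplies exactly what the paper omits. You are also right about the one subtlety: the direct purification estimate of $\bra{\psi}N\ket{\psi}-\bra{\phi}N\ket{\phi}$ loses a constant, and passing through the trace distance (equivalently, monotonicity of $D$ under partial trace applied to the pure-state identity $D(\psi,\phi)=\sqrt{1-|\braket{\psi|\phi}|^2}$) is what recovers the clean bound.
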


\subsection{Computational models}
Let us introduce computational models in this paper, especially two relativized hybrid schemes of quantum and classical computations.
\subsubsection{Quantum circuits}
First, we define $d$-depth quantum circuits and promise problems solved by the computational schemes. We refer to \cite{nc10} for a standard reference of quantum computation and circuits. We define a one-layer (depth) unitary is a set of one- and two-qubit gates act on disjoint sets of qubits. A promise problem denotes a pair $L = (L_\mathrm{yes},L_\mathrm{no})$, where $L_\mathrm{yes}, L_\mathrm{no} \subseteq \Sigma^*$ are sets of strings satisfying $L_\mathrm{yes} \cap L_\mathrm{no} = \emptyset$.

\begin{definition}[$d$-depth quantum circuit, $\mathsf{QNC}_d$]
    $\mathsf{QNC}_d$ is the class of all quantum circuit family $\{C_n : n \in \mathbb{N} \}$ satisfying that $C_n$ acts on n input qubits and $poly(n)$ ancilla qubits, and consists of successive $d$-layers of arbitrary one- and two-qubit gates, $U_1 U_2...U_d$, and measure all qubits in the computational basis.
\end{definition}

\begin{definition}[$\mathsf{BQNC}_d$]
    A promise problem $L$ is in $\mathsf{BQNC}_d$ if and only if there exists a circuit family $\{C_n : n \in \mathbb{N} \} \in \mathsf{QNC}_d$ satisfying the following properties:
    \begin{itemize}
        \item for all $x \in L_{\mathrm{yes}}$, $\mathrm{Pr}[C_{|x|}(x)=1] \geq \frac{2}{3}$;
        \item for all $x \in L_\mathrm{no}$, $\mathrm{Pr}[C_{|x|}(x)=1] \leq \frac{1}{3}$.
    \end{itemize}
\end{definition}

In the definition above, we consider arbitrary one- and two-qubit gates. In this paper, we also consider a gate set $\{H,CNOT\}$, where $H=\frac{1}{\sqrt{2}}\left(\begin{smallmatrix}1&1\\1&-1\end{smallmatrix}\right)$ is the Hadamard gate and $CNOT=\left(\begin{smallmatrix}1&0&0&0\\0&1&0&0\\0&0&0&1\\0&0&1&0\end{smallmatrix}\right)$ is the controlled-not gate. A quantum circuit consisting only of Clifford gates $\{H,S,CNOT\}$, where $S=\left(\begin{smallmatrix}1&0\\0&i\end{smallmatrix}\right)$ is the phase gate, is called a Clifford circuit.

For an oracle $\mathcal{O}$, let us define $(\mathsf{QNC}_d)^\mathcal{O}$ be a computational scheme similar to $\mathsf{QNC}_d$ considering $U_{d+1}\mathcal{O}U_d...\mathcal{O}U_1$ as layers and $(\mathsf{BQNC}_d)^\mathcal{O}$ be a set of promise problems solved by $(\mathsf{QNC}_d)^\mathcal{O}$ with high probability. Note that for relativized $d$-depth quantum circuit, we consider to add an extra single layer to process the final oracle access following \cite{ccl20}. We refer to Figure \ref{qnc} for an illustration.

\begin{figure}[htbp]
  \centering
  \includegraphics[clip,width=8.0cm]{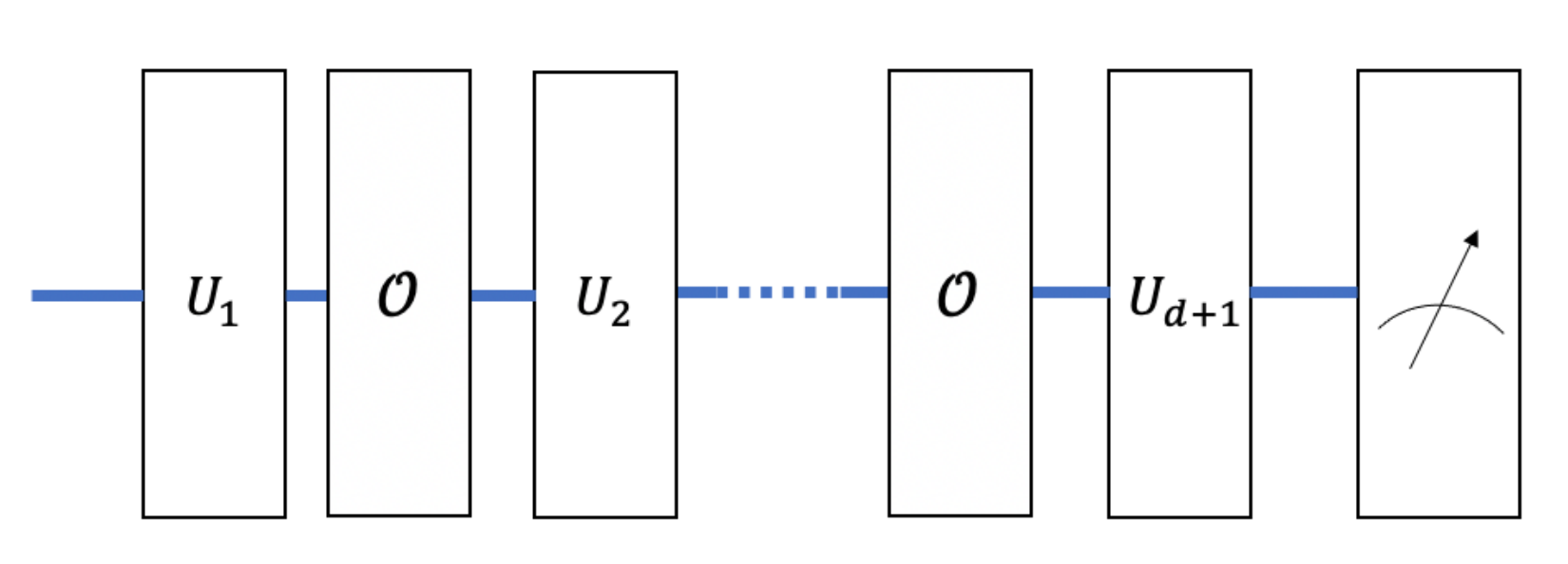}
  \caption{$d$-depth quantum circuit which can access to an oracle $\mathcal{O}$.}
  \label{qnc}
\end{figure}

\subsubsection{Quantum-classical hybrid schemes}
Next, we consider the $d$-quantum-classical scheme, which is a generalized model for $d$-depth measurement-based quantum computation. This is the same model as in \cite{ccl20}. The scheme is denoted as $d$-QC scheme and we represent it as the following sequence:
\begin{equation}\nonumber
    \left( \mathcal{A}_c \xrightarrow{c} (\textstyle{\prod_{0/1}} \otimes I) \circ U_1 \right) \xrightarrow{c,q}  \left( \mathcal{A}_c \xrightarrow{c} (\textstyle{\prod_{0/1}} \otimes I) \circ U_2  \right) \xrightarrow{c,q} \left( \mathcal{A}_c \cdot \cdot \cdot \circ U_d \right) \xrightarrow{c} \mathcal{A}_c,
\end{equation}
where $\mathcal{A}_c$ is a classical probabilistic polynomial-time algorithm, $U_i$ is a one-depth quantum layer and $\textstyle{\prod_{0/1}}$ is a measurement in the computational basis or the Hadamard basis\footnote{Without the Hadamard measurements, our upper bound becomes quantum depth $d+2$ for the quantum-classical hybrid scheme. The upper bound of Theorem \ref{th:ccl20} for the scheme also becomes quantum depth $2d+2$ without the Hadamard measurements.}. The arrows $\xrightarrow{c}$ and $\xrightarrow{q}$ represent transmissions of polynomial-size classical and quantum bits. Between quantum layers $U_i$ and $U_{i+1}$, $\mathcal{A}_c$ can use measurement results of an arbitrary part of qubits after $U_i$ applies and send classical polynomial-size information to $U_{i+1}$. We refer to Figure \ref{d-qc} for an illustration. 

\begin{figure}[htbp]
  \centering
  \includegraphics[clip,width=8.0cm]{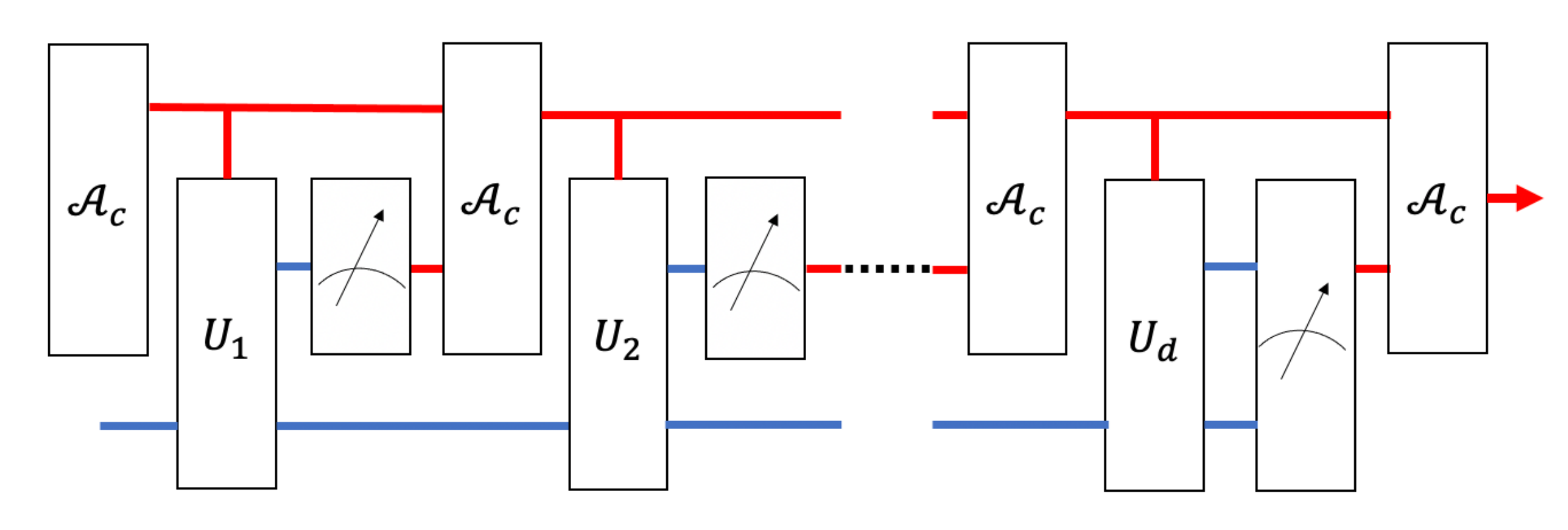}
  \caption{The $d$-QC scheme. Red lines stand for classical wires and blue ones stand for quantum wires.}
  \label{d-qc}
\end{figure}

Let us consider a relativized version of the scheme. Let $\mathcal{A}^\mathcal{O}$ be a $d$-QC scheme with access to an oracle $\mathcal{O}$. We represent the relativized $d$-QC scheme $\mathcal{A}^\mathcal{O}$ as a sequence of operators:
\begin{equation}\nonumber
    (L_1)^\mathcal{O} \xrightarrow{c,q} \cdot \cdot \cdot \xrightarrow{c,q} (L_d)^\mathcal{O} \xrightarrow{c} \mathcal{A}_c^\mathcal{O},
\end{equation}
where $\mathcal{A}_c^\mathcal{O}$ is a classical polynomial-time algorithm which can query to the oracle $\mathcal{O}$, and $(L_i)^\mathcal{O} := \mathcal{A}_c^\mathcal{O} \xrightarrow{c} (\textstyle{\prod_{0/1}} \otimes I) \circ \mathcal{O} U_i$. We refer to Figure \ref{d-qc^o} for an illustration. Then, we define the set of promise problems which can be solved by the relativized $d$-QC schemes.

\begin{figure}[htbp]
  \centering
  \includegraphics[clip,width=10.0cm]{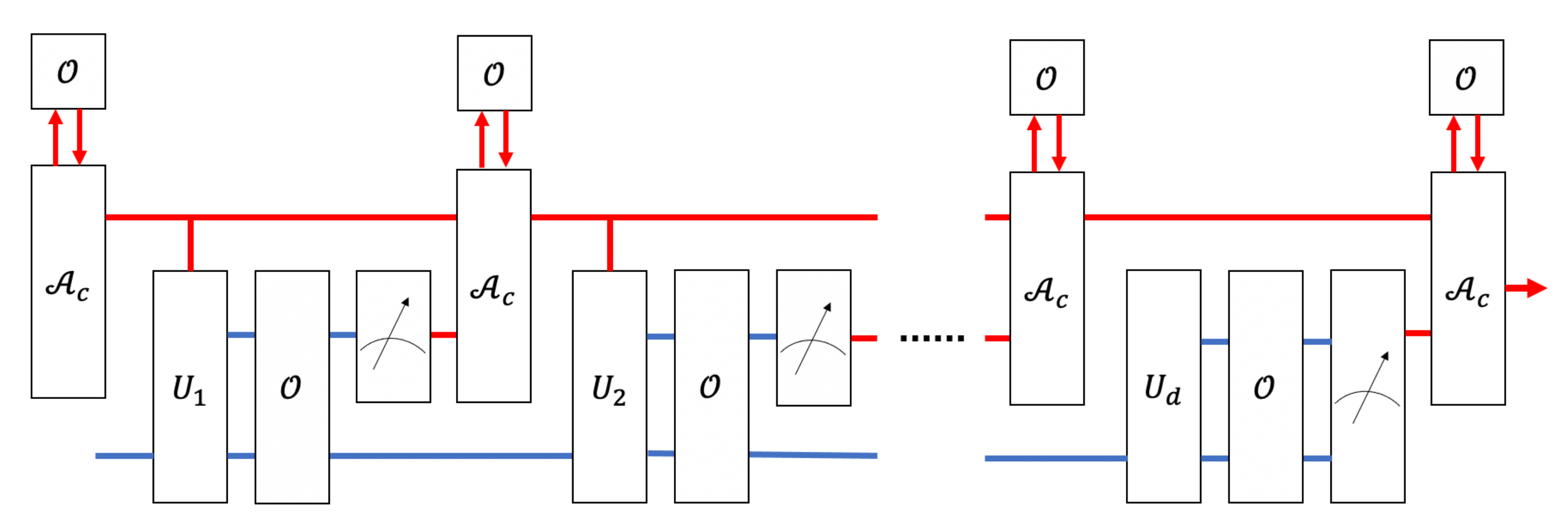}
  \caption{The $d$-QC scheme with access to an oracle $\mathcal{O}$.}
  \label{d-qc^o}
\end{figure}

\begin{definition}[$(\mathsf{BQNC}_d^{\mathsf{BPP}})^\mathcal{O}$, Definition 3.8 in \cite{ccl20}]
    A promise problem $L$ is in $(\mathsf{BQNC}_d^{\mathsf{BPP}})^\mathcal{O}$ if and only if there exists a family of relativized $d$-QC schemes $\{\mathcal{A}^\mathcal{O}_n:n \in \mathbb{N} \}$ satisfying the following properties:
    \begin{itemize}
        \item for all $x \in L_{\mathrm{yes}}$, $\mathrm{Pr}[\mathcal{A}_{|x|}^\mathcal{O}(x)=1] \geq \frac{2}{3}$;
        \item for all $x \in L_{\mathrm{no}}$, $\mathrm{Pr}[\mathcal{A}_{|x|}^\mathcal{O}(x)=1] \leq \frac{1}{3}$. 
    \end{itemize}
\end{definition}

\subsubsection{Classical-quantum hybrid schemes}
Finally, we define the $d$-classical-quantum scheme, which is a classical polynomial-time algorithm which has access to a $d$-depth quantum circuit during the computation (up to polynomial times). This is the same model as in \cite{ccl20}. The scheme is denoted as $d$-CQ scheme and represented as follows:
\begin{equation}\nonumber
    \mathcal{A}_{c,1} \xrightarrow{c} \textstyle{\prod_{0/1}} \circ U_{d} \cdot \cdot \cdot U_{1} \xrightarrow{c} \cdot \cdot \cdot \xrightarrow{c} \mathcal{A}_{c,m-1} \xrightarrow{c} \textstyle{\prod_{0/1}} \circ U_{d} \cdot \cdot \cdot U_{1} \xrightarrow{c} \mathcal{A}_{c,m},
\end{equation}
where $m$ is a polynomial in $n$, $\mathcal{A}_{c,i}$ is an $i$th classical probabilistic polynomial-time algorithm, $U_i$ is a one-depth quantum layer, and $\textstyle{\prod_{0/1}}$ is the computational basis measurement. Each $\mathcal{A}_{c,i+1}$ can depend on the polynomial-size information sent from $\mathcal{A}_{c,i}$ and measurement results of the $d$-depth quantum circuit $\mathcal{A}_{c,i}$ calls. We refer to Figure \ref{d-cq} for an illustration.

\begin{figure}[htbp]
  \centering
  \includegraphics[clip,width=12.0cm]{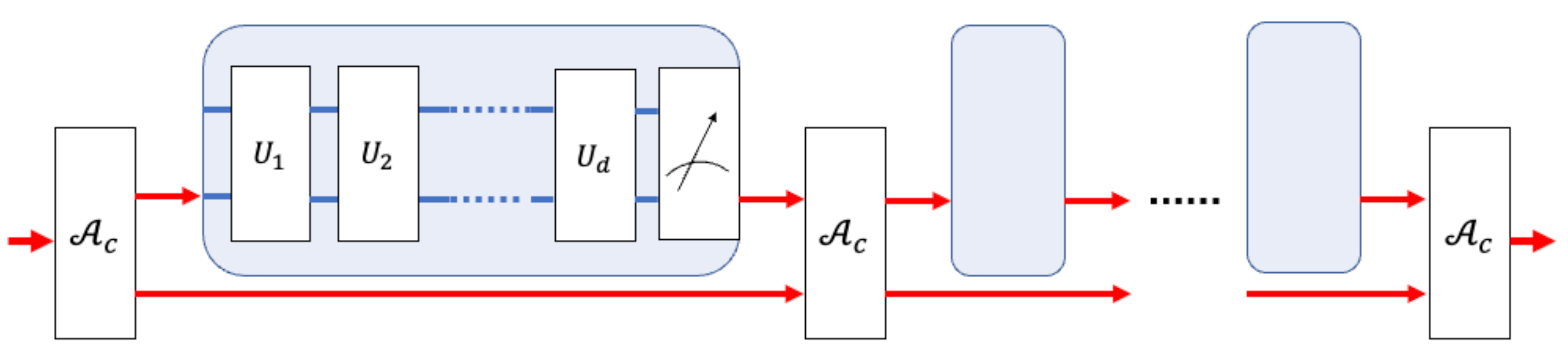}
  \caption{The $d$-CQ scheme. Red lines stand for classical wires and blue ones stand for quantum wires.}
  \label{d-cq}
\end{figure}

Let us consider a relativized version of the scheme. Let $\mathcal{A}^\mathcal{O}$ be a $d$-CQ scheme with an access to an oracle $\mathcal{O}$. We represent $\mathcal{A}^\mathcal{O}$ as follows:
\begin{equation}\nonumber
    (L_1)^\mathcal{O} \xrightarrow{c} (L_2)^\mathcal{O} \xrightarrow{c} \cdot \cdot \cdot \xrightarrow{c} (L_{m-1})^\mathcal{O} \xrightarrow{c} (\mathcal{A}_{c,m})^\mathcal{O},
\end{equation}
where $\mathcal{A}_{c,i}$ is an $i$th classical polynomial-time algorithm which can query to the oracle $\mathcal{O}$, and $(L_i)^\mathcal{O} := (\mathcal{A}_{c,i})^\mathcal{O} \xrightarrow{c} \textstyle{\prod_{0/1}} \circ (U_{d+1}\mathcal{O}U_{d} \cdot \cdot \cdot \mathcal{O}U_{1})$. We refer to Figure \ref{d-cq^o} for an illustration. Then, we define the set of promise problems which can be solved by the relativized $d$-CQ schemes.

\begin{definition}[$(\mathsf{BPP}^{\mathsf{BQNC}_d})^\mathcal{O}$, Definition 3.10 in \cite{ccl20}]
    A promise problem L is in $(\mathsf{BPP}^{\mathsf{BQNC}_d})^\mathcal{O}$ if and only if there exists a family of relativized $d$-CQ schemes $\{\mathcal{A}^\mathcal{O}_n:n \in \mathbb{N} \}$ satisfying the following properties:
    \begin{itemize}
        \item for all $x \in L_{\mathrm{yes}}$, $\mathrm{Pr}[\mathcal{A}_{|x|}^\mathcal{O}(x)=1] \geq \frac{2}{3}$;
        \item for all $x \in L_{\mathrm{no}}$, $\mathrm{Pr}[\mathcal{A}_{|x|}^\mathcal{O}(x)=1] \leq \frac{1}{3}$. 
    \end{itemize}
\end{definition}

\begin{figure}[htbp]
  \centering
  \includegraphics[clip,width=14.0cm]{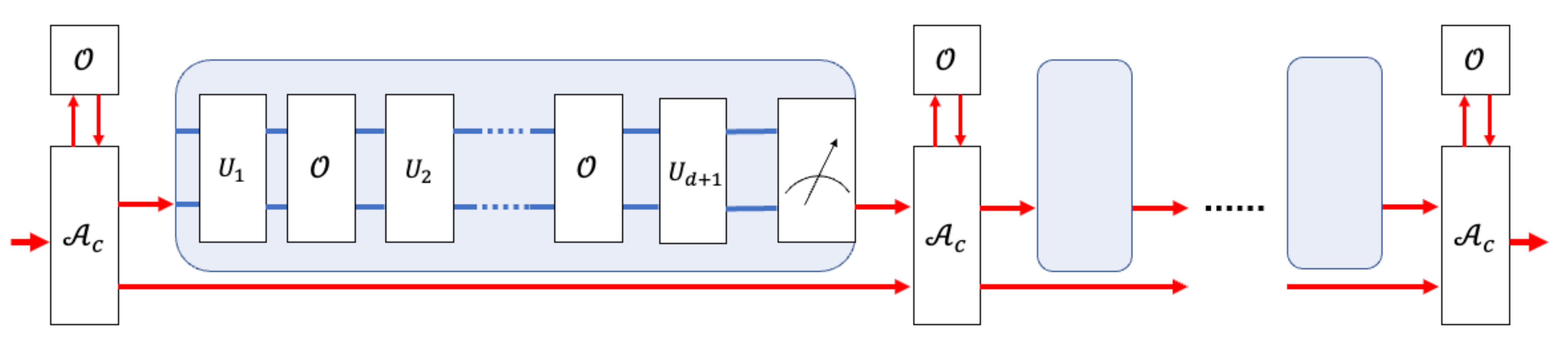}
  \caption{The $d$-CQ scheme with access to an oracle $\mathcal{O}$.}
  \label{d-cq^o}
\end{figure}

\subsection{Simon's problem}
Let us recall the definitions of Simon's function and Simon's problem \cite{sim97}.
\begin{definition}[Simon's function]
    A two-to-one function $f:\mathbb{Z}_2^n \rightarrow \mathbb{Z}_2^n$ for $n \in \mathbb{N}$ is a Simon's function if there exists a period $s \in \mathbb{Z}_2^n$ such that $f(x)=f(x \oplus s)$ for $x \in \mathbb{Z}_2^n$. Let $\mathbf{F}$ be the set of all Simon's functions from $\mathbb{Z}_2^n$ to $\mathbb{Z}_2^n$.
\end{definition}

\begin{definition}[Simon's problem]
    Given $f$ chosen from $\mathbf{F}$ uniformly at random, the problem is to obtain the period $s$.
\end{definition}

\begin{definition}[Decision Simon's problem]
     Given $f$ to be either a random Simon's function from $\mathbf{F}$ or a random one-to-one function from $\mathbb{Z}_2^n$ to $\mathbb{Z}_2^n$ with equal probability, the promise problem is to distinguish the two cases.
\end{definition}
\section{$d$-Bijective Shuffling Simon's Problem ($d$-BSSP)}\label{3}
In this section, we introduce the oracle problem, the $d$-Bijective Shuffling Simon's Problem, which is abbreviated as $d$-$\mathsf{BSSP}$ in the rest of the paper. It is similar to $d$-Shuffling Simon's Problem ($d$-$\mathsf{SSP}$) in \cite{ccl20} but there are several modifications of quantum oracles to make the upper bound $(\mathsf{BPP^{BQNC_{d+1}}})^\mathcal{O} \cap (\mathsf{BQNC_{d+1}^{BPP}})^\mathcal{O}$. 

Let us consider shufflings of functions. For any two set $X$ and $Y$, let $P(X,Y)$ be the set of one-to-one functions from $X$ to $Y$. In this work, it is enough to consider random shufflings of one-to-one functions and Simon's functions. 

\begin{definition}[$(d,f)$-shuffling, Definition 4.1 in \cite{ccl20}]
    Let $d \in \mathbb{N}$ and $f$ be a one-to-one function or a Simon's function from $\mathbb{Z}_2^n$ to $\mathbb{Z}_2^n$. A $(d,f)$-shuffling is $\mathcal{F}:=(f_0,f_1,...,f_d)$, where $f_0,...,f_{d-1}$ are chosen uniformly at random from $P(\mathbb{Z}_2^{(d+2)n},\mathbb{Z}_2^{(d+2)n})$ and  $f_d:\mathbb{Z}_2^{(d+2)n} \rightarrow \mathbb{Z}_2^{(d+2)n}$ is a function satisfying the following properties: Let $S_d :=\{f_{d-1} \circ \cdot \cdot \cdot \circ f_0(x') : x' = 0,...,2^n-1\}$.
    \begin{itemize}
        \item For $x \in S_d$, $f_d(x)=f \circ f_0^{-1} \circ \cdot \cdot \cdot \circ f_{d-1}^{-1}(x)$.
        \item For $x \notin S_d$, $f_d(x)=\perp$.
    \end{itemize}
    Let $\mathbf{SHUF}(d,f)$ be a set of all $(d,f)$-shuffling functions.
\end{definition}
Note that the way to choose $f_d$ is unique for each $f_0,...,f_{d-1}$.

We consider a standard oracle for $f_0$ and ``in-place'' permutation oracles for $f_1,...,f_{d-1},f_d$. When $f$ is a Simon's function, $f_d$ on $S_d$ is also two-to-one and there exists no quantum oracle to implement $f_d$ ``in-place'' since a two-to-one function is not a unitary operator. Therefore, we consider a function $\eta$ to make the mapping bijective. We also consider a function $\zeta$ which represents whether $x_d$ is in $S_d$. 

\begin{definition}[$(d,f)$-bijective shuffling]
     For a $(d,f)$-shuffling $\mathcal{F} =(f_0,f_1,...,f_d)$, a $(d,f)$-bijective shuffling is  $\mathcal{F}_b:=(f_0,f_1,...,f_{d-1},f_d',\zeta,\eta)$, where $f_d':\mathbb{Z}_2^{(d+2)n} \rightarrow \mathbb{Z}_2^{(d+2)n}$, $\zeta:\mathbb{Z}_2^{(d+2)n} \rightarrow \mathbb{Z}_2$ and $\eta:\mathbb{Z}_2^{(d+2)n} \rightarrow \mathbb{Z}_2$ satisfying the following properties: Let $S_d :=\{f_{d-1} \circ \cdot \cdot \cdot \circ f_0(x') : x' = 0,...,2^n-1\}$.
    \begin{itemize}
        \item For $x \in S_d$, $f_d'(x)=f_d(x)$ and $\zeta(x)=1$
        \item For $x \notin S_d$, $f_d'(x)=x$ and $\zeta(x)=0$
        \item  If $f$ is a Simon's function,
        \begin{itemize}
            \item For $x \in S_d$, $\eta(x) = \eta(x') \oplus 1$ where $x'$ is the unique element in $S_d$ such that $f_d(x)=f_d(x')$.
            \item For $x \notin S_d$, $\eta(x)=1$
        \end{itemize}
        Otherwise,
        \begin{itemize}
            \item For $x \in S_d$, $\eta(x)$ is chosen uniformly at random from $[0,1]$
            \item For $x \notin S_d$, $\eta(x)=1$
        \end{itemize}
    \end{itemize}
    Let $\mathbf{BSHUF}(d,f)$ be a set of all $(d,f)$-bijective shuffling functions.
\end{definition}

For fixed $d \in \mathbb{N}$ and a function $f:\mathbb{Z}_2^n \rightarrow \mathbb{Z}_2^n$, we can define a random oracle which is a $(d,f)$-bijective shuffling chosen uniformly randomly from $\mathbf{BSHUF}(d,f)$.

\begin{definition}[Bijective shuffling oracle $\mathcal{O}_{\mathrm{unif}}^{f,d}$]
    Let $d,n \in \mathbb{N}$. Let $f$ be a Simon's function or a one-to-one function from $\mathbb{Z}_2^n$ to $\mathbb{Z}_2^n$. The bijective shuffling oracle $\mathcal{O}_{\mathrm{unif}}^{f,d}$ is a $(d,f)$-bijective shuffling $\mathcal{F}_b$ uniformly chosen from $\mathbf{BSHUF}(d,f)$.
\end{definition}

If we sample a $(d,f)$-bijective shuffling uniformly randomly from $\mathbf{BSHUF}(d,f)$, the permutations $f_0,...,f_{d-1}$ are uniformly distributed in $P(\mathbb{Z}_2^{(d+2)n},\mathbb{Z}_2^{(d+2)n})$ independently of $f$. This is because for each $(d,f)$-shuffling, $f_d'$ and $\zeta$ are uniquely chosen and the number of ways to choose $\eta$ and $(d,f)$-bijective shufflings is the same. From the definition of $\mathbf{BSHUF}(d,f)$, for such a $(d,f)$-bijective shuffling $\mathcal{F}_b$, only $f_d'$ and $\eta$ on $S_d$ encodes the information of $f$.

Next, let us define the quantum oracle access to the bijective shuffling oracle $\mathcal{O}_{\mathrm{unif}}^{f,d}$. In this paper, as in \cite{ccl20}, we represent the input quantum state $\ket{\phi}$ to $\mathcal{O}_{\mathrm{unif}}^{f,d}$ as follows:
\begin{equation}\nonumber
    \ket{\phi} := \sum_{\mathbf{X}_0,...,\mathbf{X}_d} c(\mathbf{X}_0,...,\mathbf{X}_d) \left(\bigotimes_{i=0}^d \ket{i,\mathbf{X}_i}\right)_{\mathbf{R}_Q} \otimes \ket{0}_{\mathbf{R}_N} \otimes \ket{w(\mathbf{X}_0,...,\mathbf{X}_d)}_{\mathbf{R}_W},
\end{equation}
where $\mathbf{X}_i$ is a set of elements in the domain of the functions $f_0,...,f_{d-1},f'_d$ and $c(\mathbf{X}_0,...,\mathbf{X}_d)$ is an arbitary coefficient and $\ket{w(\mathbf{X}_0,...,\mathbf{X}_d)}$ is an arbitrary working state for quantum layers, $U_i$s. By the access to the quantum oracle, the parallel queries in the register $\mathbf{R}_Q$ and the ancilla qubits in the register $\mathbf{R}_N$ are processed, while the remaining qubits in the register $\mathbf{R}_W$ are unchanged.

For $\mathcal{F}_b \in \mathbf{BSHUF}(d,f)$, we define
\begin{eqnarray*}\nonumber
    \mathcal{F}_b\ket{\phi} := \sum_{\mathbf{X}_0,...,\mathbf{X}_d} \mspace{-10mu} c(\mathbf{X}_0,...,\mathbf{X}_d) \mspace{-35mu} & & \biggl(\ket{0,\mathbf{X}_0}\ket{f_0(\mathbf{X}_0)} \otimes \bigotimes_{i=1}^{d-1} \ket{i, f_i(\mathbf{X}_i)}\ket{0} \\  \mspace{-80mu} & & \otimes \ket{d, f'_d(\mathbf{X}_d)}\ket{\zeta(\mathbf{X}_d)}\ket{\eta(\mathbf{X}_d)}\ket{0} \biggl)_{\mathbf{R}_Q,\mathbf{R}_N} \otimes \ket{w(\mathbf{X}_0,...,\mathbf{X}_d)}_{\mathbf{R}_W}.
\end{eqnarray*}
Note that the ancilla qubits $\ket{0}$ are required to define the shadow in Section 4. We also define applying $\mathcal{O}_{\mathrm{unif}}^{f,d}$ on $\ket{\phi}$ as
\begin{equation}\nonumber
    \mathcal{O}_{\mathrm{unif}}^{f,d} (\ket{\phi}\bra{\phi}) := \sum_{\mathcal{F}_b \in \mathbf{BSHUF}(d,f)} \frac{1}{|\mathbf{BSHUF}(d,f)|} \mathcal{F}_b \ket{\phi}\bra{\phi} \mathcal{F}_b.
\end{equation}

Now we can define the $d$-Bijective Shuffling Simon's Problem ($d$-$\mathsf{BSSP}$).

\begin{definition}[The search $d$-Bijective Shuffling Simon's Problem]
     Let $d \in \mathbb{N}$ and $f$ be a random Simon's function. Given the $(d,f)$-bijective shuffling oracle $\mathcal{O}_{\mathrm{unif}}^{f,d}$, the problem is to obtain the period $s$.
\end{definition}

\begin{definition}[The decision $d$-Bijective Shuffling Simon's Problem]
    Let $d \in \mathbb{N}$ and $f$ be either a random Simon's function or a random one-to-one function with equal probability. Given the $(d,f)$-bijective shuffling oracle $\mathcal{O}_{\mathrm{unif}}^{f,d}$, the promise problem is to distinguish the two cases.
\end{definition}

We show that the $d$-$\mathsf{BSSP}$ can be solved with $d+1$-depth quantum circuits and classical processing.
\begin{theorem}
 The search $d$-$\mathsf{BSSP}$ can be solved with a $(d+1)$-depth quantum circuit composed of $\{H, CNOT\}$ with polynomial-time classical processing. Therefore it can be solved with the $(d+1)$-QC scheme and the $(d+1)$-CQ scheme, and the decision $d$-$\mathsf{BSSP}$ is in $(\mathsf{BPP^{BQNC_{d+1}}})^\mathcal{O} \cap (\mathsf{BQNC}_{d+1}^{\mathsf{BPP}})^\mathcal{O}$.
\end{theorem}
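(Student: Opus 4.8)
The plan is to run a variant of Simon's algorithm in which the composition $f_d'\circ f_{d-1}\circ\cdots\circ f_0$ is evaluated in a single forward sweep; because the oracles for $f_1,\dots,f_{d-1},f_d'$ act ``in place'', no intermediate value has to be uncomputed, and this is exactly what lets the quantum depth drop from the $2d+1$ of \cite{ccl20} to $d+1$. Concretely, I would first use $U_1$ to create $\frac{1}{\sqrt{2^n}}\sum_{x'\in\mathbb{Z}_2^n}\ket{x'}$ in the register labelled $0$, and then pipeline: the first oracle call applies the standard oracle $f_0$, writing $f_0(x')$ into a fresh ancilla, and each subsequent inter-call layer is a single layer of swaps that advances the current value one register forward, so that the next oracle call applies the next $f_i$ in place. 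After the $(d+1)$-st call, register $d$ holds $f_d'(g(x'))$ together with $\zeta(g(x'))$ and $\eta(g(x'))$, where $g:=f_{d-1}\circ\cdots\circ f_0$; since $x'\in\mathbb{Z}_2^n$ gives $g(x')\in S_d$, this value equals $f(x')$ and $\zeta(g(x'))=1$. To keep things clean I would move $x'$ into a work register before the second call, so that re-applying each $f_i$ to the now-idle registers only ever produces \emph{fixed} values that remain unentangled from $x'$; checking this disentanglement carefully is one of the necessary routine steps.

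The conceptual heart of the proof, and the step I expect to be the main obstacle, is that bijectivizing the oracle has apparently broken Simon's interference: the map $x'\mapsto(f(x'),\eta(g(x')))$ is injective, so the two preimages $x'$ and $x'\oplus s$ of a collision now carry opposite $\eta$-bits, and a measurement of the output register would simply reveal one preimage. My fix is to Hadamard-transform the $\eta$ qubit \emph{together with} the register holding $x'$. Starting from $\frac{1}{\sqrt{2^n}}\sum_{x'}\ket{x'}\ket{f(x')}\ket{\eta(g(x'))}$, grouping $x'$ with $x'\oplus s$ and using $\eta(g(x'\oplus s))=\eta(g(x'))\oplus 1$, applying $H^{\otimes n}\otimes H$ produces amplitudes proportional to $1+(-1)^{s\cdot y+c}$ on $\ket{y}\ket{c}$, which vanish unless $c=s\cdot y$. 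Hence each run returns a uniformly random $y\in\mathbb{Z}_2^n$ and the bit $c=s\cdot y$, i.e.\ one linear constraint on $s$; this twisted form of Simon's sampling is the genuinely new ingredient compared with \cite{ccl20}.

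With the sampling in hand, the rest is classical. I would take $O(n)$ independent runs — either as parallel copies inside one depth-$(d+1)$ circuit or as repeated calls — collect the constraints $y_i\cdot s=c_i$, and solve the resulting $\mathbb{Z}_2$-linear system; once the $y_i$ span $\mathbb{Z}_2^n$, which occurs after $O(n)$ runs with high probability, the period $s$ is determined uniquely, solving the search problem. The decision problem follows from the \emph{same} data: in the Simon case the system is consistent and yields a nonzero $s$, whereas in the one-to-one case $f$ is injective and $\eta$ is uniformly random, so the samples $(y,c)$ are uniform and the collected system is inconsistent except with probability at most $2^{-n}$. Outputting ``Simon'' exactly when a consistent nonzero solution exists separates the two cases.

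Finally I would verify that this is a depth-$(d+1)$ procedure in the stated schemes. Only classical post-processing and a Hadamard-basis measurement in the final layer are used, so folding the last $H^{\otimes n}\otimes H$ into the measurement keeps the quantum depth at $d+1$ (spending a separate unitary layer on it would cost $d+2$, matching the footnote), and the whole algorithm embeds both into the $(d+1)$-QC scheme and, by letting a classical routine drive the quantum subroutine, into the $(d+1)$-CQ scheme; this places the decision $d$-$\mathsf{BSSP}$ in $(\mathsf{BPP^{BQNC_{d+1}}})^\mathcal{O}\cap(\mathsf{BQNC}_{d+1}^{\mathsf{BPP}})^\mathcal{O}$. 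The remaining point is that every inter-call move must fit in a single depth-one layer: with arbitrary one- and two-qubit gates an adjacent swap is one layer, while for the restricted set $\{H,CNOT\}$ one must either treat the moves as free wire relabellings or absorb the constant-depth swap circuits into the allotted layers. I expect this one-layer $\{H,CNOT\}$ realization and the disentanglement bookkeeping of the pipeline junk to be the most delicate parts of the write-up, with the Hadamard-on-$\eta$ step being the real idea.
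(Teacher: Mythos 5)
Your proposal is correct and matches the paper's proof in its essentials: you pipeline the single standard call to $f_0$ followed by $d$ in-place calls to $f_1,\dots,f_{d-1},f'_d$, and you recover Simon-type sampling by including the $\eta$ qubit in the final Hadamard transform so that each run yields a pair $(y, s\cdot y)$ --- which is exactly the paper's algorithm and its key new ingredient. Your treatment is, if anything, more explicit than the paper's about the register bookkeeping between oracle calls and about the reduction from the decision to the search version.
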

\begin{proof}
The $d$-$\mathsf{BSSP}$ can be solved by the following algorithm which is inspired by the Simon's algorithm and queries to the quantum oracle $d+1$ times.
\begin{eqnarray*}
 &&\mspace{-80mu} \ket{0} \ket{0} \ket{0} \ket{0} \hspace{5mm} \xrightarrow{H^{\otimes n}} \hspace{5mm} \sum_{x \in \mathbb{Z}_2^n} \ket{x} \ket{0} \ket{0} \ket{0} \hspace{5mm} \xrightarrow{f_0} \hspace{5mm} \sum_{x \in \mathbb{Z}_2^n} \ket{x} \ket{f_0(x)} \ket{0} \ket{0}\\
 &\xrightarrow{f_1}& \sum_{x \in \mathbb{Z}_2^n} \ket{x} \ket{f_1(f_0(x))} \ket{0} \ket{0} \hspace{5mm} \xrightarrow{f_2} \hspace{5mm} \cdot \cdot \cdot \\
 &\xrightarrow{f_{d-1}}& \sum_{x \in \mathbb{Z}_2^n} \ket{x} \ket{f_{d-1}( \cdot \cdot \cdot f_1(f_0(x)))} \ket{0} \ket{0}\\
 &\xrightarrow{f_d',\zeta,\eta}& \sum_{x \in \mathbb{Z}_2^n} \ket{x} \ket{f(x)} \ket{\zeta(f_{d-1}( \cdot \cdot \cdot f_1(f_0(x))))} \ket{\eta(f_{d-1}( \cdot \cdot \cdot f_1(f_0(x))))} \\
 &\xrightarrow{Measure}& \mspace{-10mu} \left[ \ket{x} \ket{\eta(f_{d-1}( \cdot \cdot \cdot f_1(f_0(x))))} + \ket{x \oplus s} \ket{\eta(f_{d-1}( \cdot \cdot \cdot f_1(f_0(x \oplus s))))} \right] \ket{f(x)} \ket{1}\\
 &\xrightarrow{H^{\otimes n+1}}& \mspace{-30mu} \sum_{j \in \mathbb{Z}_2^n} \left[ (-1)^{x \cdot j} (1 + (-1)^{s \cdot j}) \ket{j} \ket{0} + (-1)^{(x \cdot j) \oplus \eta(f_{d-1}( \cdot \cdot \cdot f_1(f_0(x))))} (1 - (-1)^{s \cdot j}) \ket{j} \ket{1} \right]
\end{eqnarray*}
When the measurement result of the last qubit is 0, then $s \cdot j = 0$. Otherwise, $s \cdot j = 1$. Therefore, by sampling $O(n)$ times and solving linear equations, we can obtain the period $s$ with high probability.
\end{proof}

\section{Analyzing the Bijective Shuffling Oracle and Oneway-to-Hiding (O2H) Lemma}\label{4}
In this section, we define several notations of the bijective shuffling problem and prove Oneway-to-Hiding (O2H) Lemma \cite{ahu18} for our quantum oracle defined in the previous section. First, let us introduce notations of hidden sets.

\begin{definition}\label{over_s}
    Let $S_0=\{0,...,2^n-1\}$. For $j=1,...,d$, let $S_{j} = f_{j-1} \circ \cdot \cdot \cdot f_0(S_0)$.
\end{definition}

\newpage
\begin{definition}[The hidden sets $\mathbf{S}$, Definition 5.2 in \cite{ccl20}]\label{hiddenset}
    Let $\mathcal{F}_b$ be a $(d,f)$-bijective shuffling. The sequence of hidden sets $\mathbf{S} = (\overline{S}^{(0)},...,\overline{S}^{(d)})$ is defined as follows:
    \begin{itemize}
        \item Let $S_j^{(0)} = \mathbb{Z}_2^{(d+2)n}$ for $j=0,...,d$. $\overline{S}^{(0)} := (S_0^{(0)},...,S_0^{(d)})$.
        \item For $l = 1,...,d$, for $j=l,...,d$, we choose $S_j^{(l)} \subseteq S_j^{(l-1)}$ randomly satisfying that $\frac{|S_j^{(l)}|}{|S_j^{(l-1)}|} \leq \frac{1}{2^n}$, $f_{j-1}(S_{j-1}^{(l)})=S_j^{(l)}$, and $S_j \subseteq S_j^{(l)}$. $\overline{S}^{(l)} := (S_l^{(l)},...,S_d^{(l)})$.
    \end{itemize}
\end{definition}

\begin{definition}[$\mathcal{F}^{(l)}$ and $\hat{\mathcal{F}}^{(l)}$]\label{f_l}
    For $l=1,...,d-1$, let $f_j^{(l)}$ be $f_j$ on $S_j^{(l-1)} \setminus S_j^{(l)}$ and $\hat{f}_j^{(l)}$ be $f_j$ on $S_j^{(l)}$. Let $f_d'^{(l)}$ be $f_d'$ on $S_d^{(l-1)} \setminus S_d^{(l)}$ and $\hat{f}_d'^{(l)}$ be $f_d'$ on $S_d^{(l)}$. For $l=1,...,d$, let $\zeta^{(l)}$ be $\zeta$ on $S_d^{(l-1)} \setminus S_d^{(l)}$ and $\hat{\zeta^{(l)}}$ be $\zeta$ on $S_d^{(l)}$. Also for $l=1,...,d$, let $\eta^{(l)}$ be $\eta$ on $S_d^{(l-1)} \setminus S_d^{(l)}$ and $\hat{\eta}^{(l)}$ be $\eta$ on $S_d^{(l)}$. Then, we define $\mathcal{F}^{(1)} := (f_0,f_1^{(1)},...,f_d'^{(1)},\zeta^{(1)},\eta^{(1)})$, $\mathcal{F}^{(d+1)} := (\hat{f}_d'^{(d)},\hat{\zeta}^{(d)},\hat{\eta}^{(d)})$, and $\mathcal{F}^{(l)} := (\hat{f}_{l-1}^{(l)},f_l^{(l)},...,f_d'^{(l)},\zeta^{(l)},\eta^{(l)}))$ for $l=2,...,d$. Also, we define $\hat{\mathcal{F}}^{(0)} := (f_0,...,f_d',\zeta,\eta)$ and  $\hat{\mathcal{F}}^{(l)} := (\hat{f}_l^{(l)},...,\hat{f}_d'^{(l)},\hat{\zeta}^{(l)},\hat{\eta}^{(l)})$ for $l=1,...,d$.
\end{definition}

We stress that for $l=1,...,d$, conditioned on $\mathcal{F}^{(1)},...,\mathcal{F}^{(l)}$, the function $\hat{f}_j^{(l)}$ is still uniformly distributed in $P(S_j^{(l)},S_{j+1}^{(l)})$ for $j=l,...,d-1$. This is because the number of the ways to choose $\hat{\eta}^{(l)}$ is the same for any condition $\mathcal{F}^{(1)},...,\mathcal{F}^{(l)}$ and $\hat{f}_l^{(l)},...,\hat{f}_{d-1}^{(l)}$ in $\mathbf{BSHUF}(d,f)$. In this paper, similarly to \cite{ccl20}, we say that a quantum state $\rho$ or a classical string $s$ is uncorrelated to $\mathcal{F}^{(l)}$ if the process which outputs $\rho$ or $s$ will not change the output distribution even if we replace $\mathcal{F}^{(l)}$ by any other mapping. 

Following the concept of the hidden set $\mathbf{S}$, let us define the shadow. Definition \ref{shadow} is similar to Definition 5.3 in \cite{ccl20}, but, instead of $\perp$ (a symbol represents a constant with no information), we consider the value of the domain and a function $\xi$ which can be recognized as a boolean flag of the shadow. Note that the reason why we also consider the function $\xi$ is to make the quantum oracle unitary.

\begin{definition}[Shadow function]\label{shadow}
    Let $\mathcal{F}_b := (f_0,...,f_{d-1},f'_d,\zeta,\eta)$ be a $(d,f)$-bijective shuffling. Fix the hidden sets $\mathbf{S} = (\overline{S}^{(0)},...,\overline{S}^{(d)})$. Let $c$ be a constant bit independent of $\mathcal{F}_b$ and $\mathbf{S}$. For $j=l,...,d-1$, let $g_j$ be the function such that if $x \in S_j^{(l)}$, $g_j(x) = x$; otherwise, $g_j(x) = f_j(x)$. Let $g_d$ be the function such that if $x \in S_d^{(l)}$, $g_d(x) = x$; otherwise, $g_d(x) = f_d'(x)$. Let $\zeta_g$ be the function such that if $x \in S_d^{(l)}$, $\zeta_g(x) = c$; otherwise, $\zeta_g(x) = \zeta(x)$. Let $\eta_g$ be the function such that if $x \in S_d^{(l)}$, $\eta_g(x) = c$; otherwise, $\eta_g(x) = \eta(x)$. Finally, for $j=l,...,d$, let $\xi_j$ be the function such that if $x \in S_j^{(l)}$, $\xi_j=c$; otherwise, $\xi_j(x)=c \oplus 1$. The shadow $\mathcal{G}$ of $\mathcal{F}_b$ in $\overline{S}^{(l)} = (S_l^{(l)},...,S_d^{(l)})$ is defined as $\mathcal{G} := (f_0,...,f_{l-1},g_l,...,g_d,\xi_l,...,\xi_d,\zeta_g,\eta_g)$.
\end{definition}

We stress that the shadow $\mathcal{G}$ does not contain any information of $\mathcal{F}_b$ in $\overline{S}^{(l)}$, which is $\hat{\mathcal{F}}^{(l)}$. Now we define the quantum oracle $\mathcal{G}$ of $\mathcal{F}_b$ in $\overline{S}^{(l)}$ as follows:

\begin{eqnarray*}\nonumber
    \mathcal{G}\ket{\phi} &:=& \sum_{\mathbf{X}_0,...,\mathbf{X}_d} c(\mathbf{X}_0,...,\mathbf{X}_d) \biggl(\ket{0,\mathbf{X}_0}\ket{f_0(\mathbf{X}_0)} \otimes \bigotimes_{i=1}^{l-1} \ket{i, f_i(\mathbf{X}_i)}\ket{0} \otimes \\ && \mspace{70mu} \bigotimes_{i=l}^{d-1} \ket{i, g_i(\mathbf{X}_i)}\ket{\xi_i(\mathbf{X}_i)} \otimes \ket{d, g_d(\mathbf{X}_d)}\ket{\zeta_g(\mathbf{X}_d)}\ket{\eta_g(\mathbf{X}_d)} \ket{\xi_d(\mathbf{X}_d)} \biggl)_{\mathbf{R}_Q,\mathbf{R}_N} \\ && \mspace{400mu} \otimes \ket{w(\mathbf{X}_0,...,\mathbf{X}_d)}_{\mathbf{R}_W}.
\end{eqnarray*}

We will introduce the ``semi-classical''  oracle \cite{ahu18} in our setting.

\begin{definition}[$U^{\mathcal{F}_b \setminus \overline{S}^{(l)}}$, Definition 5.5 in \cite{ccl20}]
     Let $\mathcal{F}_b$ be a $(d,f)$-bijective shuffling. Let $\mathbf{S} = (\overline{S}^{(0)},...,\overline{S}^{(d)})$ be the hidden sets. Let $U$ be a unitary operator acts on qubits of the register $\mathbf{R}$. For $l=1,...,d$, let $U^{\mathcal{F}_b \setminus \overline{S}^{(l)}} := \mathcal{F}_b U_{\overline{S}^{(l)}}U$ be a unitary operator acts on qubits of registers $(\mathbf{R},\mathbf{I})$ where $\mathbf{I}$ is a one-qubit register and $U_{\overline{S}^{(l)}}$ is defined as follows:
    \begin{equation}\nonumber
        U_{\overline{S}^{(l)}} \ket{(l,\mathbf{X}_l),...,(d,\mathbf{X}_d)}_\mathbf{R} \ket{b}_\mathbf{I} := 
        \left\{
            \begin{alignedat}{2}   
                & \ket{(l,\mathbf{X}_l),...,(d,\mathbf{X}_d)}_\mathbf{R} \ket{b}_\mathbf{I} \text{ if every } \mathbf{X}_i \cap S_i^{(l)} = \emptyset, \\   
                & \ket{(l,\mathbf{X}_l),...,(d,\mathbf{X}_d)}_\mathbf{R} \ket{b \oplus 1}_\mathbf{I} \text{ otherwise}.
            \end{alignedat} 
        \right.
    \end{equation}
\end{definition}

\begin{definition}[$\mathrm{Pr}(find \, \, \overline{S}^{(k)}:U^{\mathcal{F}_b \setminus \overline{S}^{(k)}}, \rho)$, Definition 5.6 in \cite{ccl20}]\label{pr}
    Let $k,d \in \mathbb{N}$ satisfying $k \leq d$. Let $\rho$ be any input quantum state and $U$ be any unitary operator acts on $\rho$. We define
    \begin{equation}\nonumber
        \mathrm{Pr}[\text{find }\overline{S}^{(k)}:U^{\mathcal{F}_b \setminus \overline{S}^{(k)}}, \rho] := \mathbb{E} \left[ \mathrm{tr} \left( (I_\mathbf{R} \otimes (I - \ket{0} \bra{0})_\mathbf{I}) \circ U^{\mathcal{F}_b \setminus \overline{S}^{(k)}} \circ \rho \otimes \ket{0} \bra{0}_\mathbf{I} \right) \right],
    \end{equation}
    where $\mathbb{E}$ is the expectation value over the random $\mathcal{F}_b$ and $\overline{S}^{(k)}$.
\end{definition}

When $\rho$ is a pure state, say $\ket{\psi}$, we have
\[
    U^{\mathcal{F}_b \setminus \overline{S}^{(l)}} \ket{\psi}_\mathbf{R} \ket{0}_\mathbf{I} := \ket{\phi_0}_\mathbf{R} \ket{0}_\mathbf{I} + \ket{\phi_1}_\mathbf{R} \ket{1}_\mathbf{I}
\]
and $\mathrm{Pr}[find \, \, \overline{S}^{(k)}:U^{\mathcal{F}_b \setminus \overline{S}^{(k)}}, \ket{\psi}] = \mathbb{E}[||\ket{\phi_1}_\mathbf{R}||^2]$. Note that, due to the definition of $\mathcal{F}_b$, $\ket{\phi_0}$ and $\ket{\phi_1}$ are orthogonal because $\ket{\phi_0}$ involves no query to $\overline{S}^{(k)}$ but $\ket{\phi_1}$ does. Thus,
\[
    \mathcal{F}_bU\ket{\psi}=\ket{\phi_0}+\ket{\phi_1}.
\]
Similarly, let us consider
\[
    \mathcal{G}U_{\overline{S}^{(k)}}U \ket{\psi}_\mathbf{R} \ket{0}_\mathbf{I} := \ket{\phi_0}_\mathbf{R} \ket{0}_\mathbf{I} + \ket{\phi_1^{\perp}}_\mathbf{R} \ket{1}_\mathbf{I}.
\]
Since $\ket{\phi_0}$ and $\ket{\phi_1^{\perp}}$ are orthogonal from the definition of $\mathcal{G}$,
\[
    \mathcal{G}U\ket{\psi}=\ket{\phi_0}+\ket{\phi_1^{\perp}}.
\]
Note that $\ket{\phi_1}$ and $\ket{\phi_1^{\perp}}$ are orthogonal from the definition of $\mathcal{F}_b$ and $\mathcal{G}$. Then, by the concavity of the mixed state, we can prove the following lemma in a very similar way to Lemma 5.7 in \cite{ccl20}.

\begin{lemma}[Oneway-to-hiding(O2H) lemma for the bijective shuffling oracle]\label{O2H}
    Let $k,d \in \mathbb{N}$ satisfying $k \leq d$. Let $\mathcal{F}_b$ be a $(d,f)$-bijective shuffling. Let $\mathbf{S} = (\overline{S}^{(0)},...,\overline{S}^{(d)})$ be the hidden sets. Let $\mathcal{G}$ be the shadow of $\mathcal{F}_b$ in $\overline{S}^{(k)}$. Then, for any quantum algorithm $\mathcal{A}$, any unitary operator $U$, any initial quantum state $\rho$ and any classical string $t$,
    \begin{eqnarray*}
        |\mathrm{Pr}[\mathcal{A} (\mathcal{F}_bU(\rho)) = t] - \mathrm{Pr}[\mathcal{A} (\mathcal{G}U(\rho)) = t]| &\leq& B(\mathcal{F}_bU(\rho),\mathcal{G}U(\rho)) \\
        &\leq& \sqrt{2\mathrm{Pr}[\text{find }\overline{S}^{(k)}:U^{\mathcal{F}_b \setminus \overline{S}^{(k)}}, \rho]}.
    \end{eqnarray*}
\end{lemma}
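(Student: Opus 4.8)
The plan is to prove the two inequalities separately. The first one is immediate: applying the Claim of Section 2 (which bounds the distinguishing advantage of any algorithm by the Bures distance of its two inputs) to the mixed states $\mathcal{F}_bU(\rho)$ and $\mathcal{G}U(\rho)$, with algorithm $\mathcal{A}$ and string $t$, gives exactly $|\mathrm{Pr}[\mathcal{A}(\mathcal{F}_bU(\rho))=t] - \mathrm{Pr}[\mathcal{A}(\mathcal{G}U(\rho))=t]| \leq B(\mathcal{F}_bU(\rho), \mathcal{G}U(\rho))$. So all the real content sits in the second inequality, bounding this Bures distance by the find probability.

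First I would treat the case of a pure input $\ket{\psi}$ with the randomness $(\mathcal{F}_b, \overline{S}^{(k)})$ fixed. Using the decomposition already set up just before the lemma, namely $\mathcal{F}_bU\ket{\psi} = \ket{\phi_0} + \ket{\phi_1}$ and $\mathcal{G}U\ket{\psi} = \ket{\phi_0} + \ket{\phi_1^{\perp}}$ together with the three orthogonality relations $\ket{\phi_0}\perp\ket{\phi_1}$, $\ket{\phi_0}\perp\ket{\phi_1^{\perp}}$ and $\ket{\phi_1}\perp\ket{\phi_1^{\perp}}$, I would compute $\langle \mathcal{F}_bU\psi \mid \mathcal{G}U\psi\rangle = \|\ket{\phi_0}\|^2$, since all three cross terms vanish. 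As both oracle actions are norm-preserving, $\|\ket{\phi_0}\|^2 + \|\ket{\phi_1}\|^2 = \|\ket{\phi_0}\|^2 + \|\ket{\phi_1^{\perp}}\|^2 = 1$, so the fidelity of the two pure outputs is $F = |\langle \mathcal{F}_bU\psi \mid \mathcal{G}U\psi\rangle| = 1 - \|\ket{\phi_1}\|^2$, whence $B(\mathcal{F}_bU\ket{\psi}, \mathcal{G}U\ket{\psi})^2 = 2 - 2F = 2\|\ket{\phi_1}\|^2$.

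Next I would lift this to mixed inputs and average over the randomness. Writing $\rho = \sum_j q_j \ket{\psi_j}\bra{\psi_j}$, both $\mathcal{F}_bU(\rho)$ and $\mathcal{G}U(\rho)$ are convex combinations over $(\mathcal{F}_b, \overline{S}^{(k)})$ and over $j$ of the pure-state outputs above, with the \emph{same} weights. Applying the joint concavity of the fidelity to this common decomposition gives $F(\mathcal{F}_bU(\rho), \mathcal{G}U(\rho)) \geq \mathbb{E}_{\mathcal{F}_b, \overline{S}^{(k)}}\big[\sum_j q_j (1 - \|\ket{\phi_1^{(j)}}\|^2)\big] = 1 - \mathrm{Pr}[\text{find }\overline{S}^{(k)}:U^{\mathcal{F}_b\setminus\overline{S}^{(k)}}, \rho]$, where the last equality is the definition of the find probability for mixed input (Definition \ref{pr}). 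Since $B=\sqrt{2-2F}$ is decreasing in $F$, this lower bound on the fidelity converts directly into $B(\mathcal{F}_bU(\rho), \mathcal{G}U(\rho)) \leq \sqrt{2\mathrm{Pr}[\text{find }\overline{S}^{(k)}:U^{\mathcal{F}_b\setminus\overline{S}^{(k)}}, \rho]}$, as required.

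The step I expect to be most delicate is the orthogonality $\ket{\phi_1}\perp\ket{\phi_1^{\perp}}$, which is exactly where this oracle departs from \cite{ccl20}. It must be argued from the definitions of $\mathcal{F}_b$ and of the shadow $\mathcal{G}$: on a query landing in $\overline{S}^{(k)}$ the true oracle writes the genuine shuffled value, whereas the shadow writes the identity value together with the flipped flag $\xi_j = c\oplus 1$ (and the shadowed $\zeta_g,\eta_g$). The hard part will be checking that these flag registers—introduced precisely so that the shadow stays unitary in place of a $\perp$ symbol—force the two $\ket{1}$-branch outputs to be orthogonal on every branch, so that $\langle\phi_1\mid\phi_1^{\perp}\rangle$ indeed vanishes. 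The remaining care is bookkeeping: applying concavity to the fidelity (which is jointly concave) rather than to $B$ directly, and ensuring the pure-state decompositions of $\mathcal{F}_bU(\rho)$ and $\mathcal{G}U(\rho)$ share the same weights so that joint concavity applies term by term.
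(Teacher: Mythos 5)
Your proposal is correct and follows essentially the same route as the paper: the paper's proof consists precisely of the decomposition $\mathcal{F}_bU\ket{\psi}=\ket{\phi_0}+\ket{\phi_1}$, $\mathcal{G}U\ket{\psi}=\ket{\phi_0}+\ket{\phi_1^{\perp}}$ with the three orthogonality relations set up just before the lemma (including the asserted $\ket{\phi_1}\perp\ket{\phi_1^{\perp}}$, which you rightly flag as the point where the new flag registers $\xi_j,\zeta_g,\eta_g$ do the work), followed by the concavity argument to lift the pure-state fidelity bound to mixed states, exactly as in Lemma 5.7 of Chia--Chung--Lai. Your explicit computation $F=1-\|\ket{\phi_1}\|^2$ and the conversion $B=\sqrt{2-2F}\leq\sqrt{2\,\mathrm{Pr}[\text{find}]}$ is the intended argument.
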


By the union bound, it is also shown that the finding probability of $\overline{S}^{(k)}$ is bounded. The following lemma and its proof are very similar to Lemma 5.8 in \cite{ccl20} since
\[
    \left(\ket{0,\mathbf{X}_0}\ket{f_0(\mathbf{X}_0)} \otimes \bigotimes_{i=1}^{d-1} \ket{i, f_i(\mathbf{X}_i)}\ket{0} \otimes \ket{d, f_d'(\mathbf{X}_d)}\ket{\zeta(\mathbf{X}_d)}\ket{\eta(\mathbf{X}_d)}\ket{0} \right)
\]
are orthogonal for different sets of queries $\mathbf{X}_0,...,\mathbf{X}_d$.
\begin{lemma}\label{bfp}
    Suppose $\mathrm{Pr}[x \in S_i^{(k)}|x \in S_i^{(k-1)}] \leq p$ for $i=k,...,d$. Then for any unitary operator $U$ and initial quantum state $\rho$, which are promised to uncorrelated to $\hat{\mathcal{F}}^{(k-1)}$ and $\overline{S}^{(k)}$,
    \[
        \mathrm{Pr}[\text{find} \, \, \overline{S}^{(k)}:U^{\mathcal{F}_b \setminus \overline{S}^{(k)}}, \rho] \leq p \cdot q,
    \]
    where q is the number of parallel queries $\mathcal{F}_b U$ performs, which is $\sum_{i=0}^d |\mathbf{X}_i|$.
\end{lemma}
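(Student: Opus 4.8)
The plan is to follow the ``hitting-probability'' argument for semi-classical oracles and adapt it to the parallel-query structure of $\mathcal{F}_b U$. First I would reduce to pure inputs: the quantity $\mathrm{Pr}[\text{find}\,\overline{S}^{(k)}:U^{\mathcal{F}_b\setminus\overline{S}^{(k)}},\rho]$ is linear in $\rho$ and the target bound $p\cdot q$ is independent of the input, so by convexity it suffices to treat an arbitrary pure state $\ket{\psi}$. Expanding $U\ket{\psi}$ in the parallel-query basis as
\[
 U\ket{\psi}=\sum_{\mathbf{X}_0,\dots,\mathbf{X}_d}c(\mathbf{X}_0,\dots,\mathbf{X}_d)\,\ket{\mathbf{X}_0,\dots,\mathbf{X}_d}_{\mathbf{R}_Q,\mathbf{R}_N}\otimes\ket{w(\mathbf{X}_0,\dots,\mathbf{X}_d)}_{\mathbf{R}_W},
\]
I would invoke the orthogonality of the branches indexed by distinct tuples $(\mathbf{X}_0,\dots,\mathbf{X}_d)$ stated just above the lemma. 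Since $\mathcal{F}_b$ acts unitarily on $\mathbf{R}$ and leaves the flag register $\mathbf{I}$ untouched, the ``find'' weight is already fixed by $U_{\overline{S}^{(k)}}U\ket{\psi}$, and the only randomness that matters is that of the hidden set $\overline{S}^{(k)}$.

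By the definition of $U_{\overline{S}^{(k)}}$, the flag is set to $\ket{1}_{\mathbf{I}}$ on exactly those branches for which $\mathbf{X}_i\cap S_i^{(k)}\neq\emptyset$ for some $i\in\{k,\dots,d\}$, so for a fixed hidden set the component $\ket{\phi_1}$ satisfies $\|\ket{\phi_1}\|^2=\sum_{(\mathbf{X}_0,\dots,\mathbf{X}_d):\,\exists i,\,\mathbf{X}_i\cap S_i^{(k)}\neq\emptyset}|c(\mathbf{X}_0,\dots,\mathbf{X}_d)|^2$. Taking the expectation over $\overline{S}^{(k)}$, the crucial use of the hypothesis that $U$ and $\ket{\psi}$ are uncorrelated to $\hat{\mathcal{F}}^{(k-1)}$ and $\overline{S}^{(k)}$ is that the amplitudes $c(\mathbf{X}_0,\dots,\mathbf{X}_d)$ do not depend on which subsets $S_i^{(k)}\subseteq S_i^{(k-1)}$ are drawn; this lets me exchange the sum and the expectation to obtain
\[
 \mathrm{Pr}[\text{find}\,\overline{S}^{(k)}:U^{\mathcal{F}_b\setminus\overline{S}^{(k)}},\ket{\psi}]=\sum_{\mathbf{X}_0,\dots,\mathbf{X}_d}|c(\mathbf{X}_0,\dots,\mathbf{X}_d)|^2\cdot\mathrm{Pr}\!\left[\exists i,\;\mathbf{X}_i\cap S_i^{(k)}\neq\emptyset\right].
\]

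Finally I would bound the hitting probability of each fixed tuple by a union bound. Because $S_i^{(k)}\subseteq S_i^{(k-1)}$, every queried element $x\in\mathbf{X}_i$ lies in $S_i^{(k)}$ with probability $\mathrm{Pr}[x\in S_i^{(k)}]=\mathrm{Pr}[x\in S_i^{(k)}\mid x\in S_i^{(k-1)}]\cdot\mathrm{Pr}[x\in S_i^{(k-1)}]\le p$, whence
\[
 \mathrm{Pr}\!\left[\exists i,\;\mathbf{X}_i\cap S_i^{(k)}\neq\emptyset\right]\le\sum_{i=k}^d\sum_{x\in\mathbf{X}_i}\mathrm{Pr}[x\in S_i^{(k)}]\le p\sum_{i=k}^d|\mathbf{X}_i|\le p\cdot q.
\]
Using $\sum_{\mathbf{X}_0,\dots,\mathbf{X}_d}|c(\mathbf{X}_0,\dots,\mathbf{X}_d)|^2=1$ then yields the bound $p\cdot q$ for pure states, and convexity extends it to arbitrary $\rho$. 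The step I expect to be the most delicate is the exchange of sum and expectation in the second display: I must argue carefully from the uncorrelation promise that conditioning on the algorithm's state leaves the conditional probability $\mathrm{Pr}[x\in S_i^{(k)}\mid x\in S_i^{(k-1)}]\le p$ intact, so that the hidden set is not biased toward the support of the queries. The rest is the routine union bound and normalization, mirroring Lemma~5.8 of \cite{ccl20}.
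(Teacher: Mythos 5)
Your proposal is correct and follows essentially the same route the paper intends: decompose $U\ket{\psi}$ over the orthogonal parallel-query branches, use the uncorrelation promise to average over the random choice of $\overline{S}^{(k)}$ independently of the amplitudes, and apply a union bound over the at most $q$ queried points using $\mathrm{Pr}[x\in S_i^{(k)}\mid x\in S_i^{(k-1)}]\le p$. The paper itself only sketches this by pointing to the orthogonality of the query branches and deferring to Lemma~5.8 of \cite{ccl20}; your write-up fills in exactly that argument.
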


\section{Proof of Lower Bounds}\label{5}
In this section, we prove the lower bounds of $d$-$\mathsf{BSSP}$ for $(\mathsf{BQNC}_d)^\mathcal{O}$, $(\mathsf{BQNC}_d^{\mathsf{BPP}})^\mathcal{O}$ and  $(\mathsf{BPP}^{\mathsf{BQNC}_d})^\mathcal{O}$. The proofs are almost the same as the proof of Theorem 6.1, 7.1 and 8.1 in \cite{ccl20} respectively except applying Lemma \ref{O2H} and Lemma \ref{bfp} instead of Lemma 5.7 and Lemma 5.8 in \cite{ccl20}. 

\subsection{Lower bounds for $(\mathsf{BQNC}_d)^\mathcal{O}$}

First we establish $d$-$\mathsf{BSSP}$ is intractable for any $\mathsf{QNC}_d$ circuit. By applying the Oneway-to-Hiding lemma inductively, it can be shown that $U_{d+1} {\mathcal{F}_b} U_d \cdot \cdot \cdot {\mathcal{F}_b} U_1$ is indistinguishable from $U_{d+1} \mathcal{G} U_d \cdot \cdot \cdot \mathcal{G} U_1$. The same argument as in \cite{ccl20} gives the following result.

\begin{theorem}\label{lb_qncd}
    Let $n,d \in \mathbb{N}$. Let $\mathcal{A}$ be any $d$-depth quantum circuit and $\rho$ be any initial state. Let $f$ be a random Simon's function from $\mathbb{Z}_2^n$ to $\mathbb{Z}_2^n$ with period $s$ and $\mathcal{F}_b$ be the $(d,f)$-bijective shuffling sampled from $\mathcal{O}_{\mathrm{unif}}^{f,d}$. Then
    \[
        \mathrm{Pr}[\mathcal{A}^{\mathcal{F}_b}(\rho)=s] \leq d \cdot \sqrt{\frac{\mathsf{poly(n)}}{2^n}} + \frac{1}{2^n}.
    \]
\end{theorem}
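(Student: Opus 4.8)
The goal is to bound the success probability of any $d$-depth quantum circuit $\mathcal{A}=U_{d+1}\mathcal{F}_b U_d\cdots\mathcal{F}_b U_1$ in finding the period $s$, using the machinery of shadows and the O2H lemma developed in Section~\ref{4}. The central idea is a hybrid argument: I would replace the true oracle $\mathcal{F}_b$ by its shadow $\mathcal{G}$ one query-layer at a time, argue that each replacement changes the output distribution only negligibly (via Lemma~\ref{O2H}), and then observe that once \emph{all} queries are answered by the shadow, the circuit has no information about $f$ on the hidden set and hence cannot output $s$ better than random guessing.

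\textbf{Key steps.}
First I would set up the sequence of hidden sets $\mathbf{S}=(\overline{S}^{(0)},\ldots,\overline{S}^{(d)})$ from Definition~\ref{hiddenset} together with the shadows $\mathcal{G}^{(k)}$ of $\mathcal{F}_b$ in each $\overline{S}^{(k)}$. The hybrid $\mathcal{H}_k$ would answer the first $k$ query-layers with the true $\mathcal{F}_b$ and the remaining layers with the shadow $\mathcal{G}^{(k)}$ (so $\mathcal{H}_0$ is the fully-shadowed circuit and $\mathcal{H}_d$ is the true circuit $\mathcal{A}^{\mathcal{F}_b}$). Second, for the telescoping step, I would bound $|\mathrm{Pr}[\mathcal{H}_k=s]-\mathrm{Pr}[\mathcal{H}_{k-1}=s]|$ by applying Lemma~\ref{O2H} at layer $k$: writing the state just before the $k$-th oracle call as $U\rho$, the difference is at most $\sqrt{2\,\mathrm{Pr}[\text{find }\overline{S}^{(k)}:U^{\mathcal{F}_b\setminus\overline{S}^{(k)}},\rho]}$. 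Third, I would invoke Lemma~\ref{bfp} with the refinement ratio $p\le 1/2^n$ guaranteed by Definition~\ref{hiddenset} and $q=\mathrm{poly}(n)$ parallel queries, giving a finding probability at most $\mathrm{poly}(n)/2^n$, hence each hybrid step costs at most $\sqrt{\mathrm{poly}(n)/2^n}$. Fourth, I would handle the base case: in $\mathcal{H}_0$ the circuit only ever sees the shadow $\mathcal{G}$, which by construction carries no information about $\hat{\mathcal{F}}^{(0)}$ on $\overline{S}^{(0)}=\mathbb{Z}_2^{(d+2)n}$, so the period $s$ is information-theoretically hidden and $\mathrm{Pr}[\mathcal{H}_0=s]\le 1/2^n$. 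Summing the $d$ hybrid steps via the triangle inequality then yields the claimed bound $d\sqrt{\mathrm{poly}(n)/2^n}+1/2^n$.

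\textbf{The main obstacle.}
The delicate point is verifying that the hypotheses of Lemma~\ref{bfp} actually hold at every hybrid step, in particular that the unitary $U$ and state $\rho$ feeding into layer $k$ are \emph{uncorrelated} to $\hat{\mathcal{F}}^{(k-1)}$ and to $\overline{S}^{(k)}$. This requires the observation, emphasized after Definition~\ref{f_l}, that conditioned on $\mathcal{F}^{(1)},\ldots,\mathcal{F}^{(k)}$ the maps $\hat{f}_j^{(k)}$ remain uniformly distributed, so the part of the oracle defining the deeper hidden sets is independent of everything the algorithm has produced so far; the extra $\eta$ component and the ancilla flags $\xi_j$ were introduced precisely so that this independence survives the passage to the in-place bijective oracle. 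Apart from tracking this conditioning carefully, the argument is a routine transcription of the inductive O2H proof of Theorem~6.1 in~\cite{ccl20}, since all the quantities entering the hybrid telescoping have already been shown to obey the same bounds in our oracle model.
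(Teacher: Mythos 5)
Your proposal is correct and follows essentially the same route as the paper, which itself only sketches the argument as an inductive application of Lemma~\ref{O2H} to show $U_{d+1}\mathcal{F}_b U_d\cdots\mathcal{F}_b U_1$ is indistinguishable from the fully-shadowed circuit and then defers to Theorem~6.1 of~\cite{ccl20}. Your hybrid telescoping, the use of Lemma~\ref{bfp} with $p\le 1/2^n$ and $q=\mathrm{poly}(n)$, the uncorrelatedness bookkeeping, and the $1/2^n$ base case are exactly the intended ingredients.
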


\begin{corollary}
    The search $d$-$\mathsf{BSSP}$ can be solved with at most negligible probability by any $(\mathsf{QNC}_d)^\mathcal{O}$ and the decision $d$-$\mathsf{BSSP}$ is not in $(\mathsf{BQNC}_d)^\mathcal{O}$
\end{corollary}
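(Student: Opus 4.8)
The plan is to run a hybrid argument that replaces the real oracle $\mathcal{F}_b$ by its shadow one query layer at a time, following the proof of Theorem~6.1 in \cite{ccl20}. Writing the relativized circuit as $U_{d+1}\mathcal{F}_b U_d \cdots \mathcal{F}_b U_1$ acting on $\rho$, I would fix the nested hidden sets $\mathbf{S}=(\overline{S}^{(0)},\dots,\overline{S}^{(d)})$ of Definition~\ref{hiddenset} and, for each $l=1,\dots,d$, let $\mathcal{G}^{(l)}$ be the shadow of $\mathcal{F}_b$ in $\overline{S}^{(l)}$ as in Definition~\ref{shadow}. The hybrid states $\ket{\psi_l}$ use the shadows $\mathcal{G}^{(1)},\dots,\mathcal{G}^{(l)}$ in the first $l$ oracle slots and keep $\mathcal{F}_b$ in the remaining $d-l$ slots, so that $\ket{\psi_0}$ is the genuine circuit output and $\ket{\psi_d}=U_{d+1}\mathcal{G}^{(d)}U_d\cdots\mathcal{G}^{(1)}U_1\rho$ is fully shadowed.

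For the step from $\ket{\psi_{l-1}}$ to $\ket{\psi_l}$ I would group everything before the $l$-th oracle call, namely $\mathcal{G}^{(l-1)}U_{l-1}\cdots\mathcal{G}^{(1)}U_1$ applied to $\rho$, into an initial state $\rho^{(l)}$, take $U=U_l$, and fold the trailing layers $U_{l+1},\dots,U_{d+1}$ together with the remaining $\mathcal{F}_b$-queries into the final algorithm. Lemma~\ref{O2H} then bounds the change in $\mathrm{Pr}[\,\cdot = s\,]$ by $\sqrt{2\,\mathrm{Pr}[\text{find }\overline{S}^{(l)}:U_l^{\mathcal{F}_b\setminus\overline{S}^{(l)}},\rho^{(l)}]}$, and Lemma~\ref{bfp} bounds this finding probability by $p\cdot q$, where $q=\mathrm{poly}(n)$ is the number of parallel queries and $p\le 1/2^n$ because the hidden sets satisfy $|S_i^{(l)}|/|S_i^{(l-1)}|\le 1/2^n$. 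Hence each step costs $\sqrt{\mathrm{poly}(n)/2^n}$, and summing the $d$ steps with the triangle inequality (using the bound relating output probabilities to the Bures distance built into Lemma~\ref{O2H}) produces the $d\cdot\sqrt{\mathrm{poly}(n)/2^n}$ term.

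To close, I would observe that by construction $\mathcal{G}^{(l)}$ retains no information about $\hat{\mathcal{F}}^{(l)}$, so $\ket{\psi_d}$ is uncorrelated to $\hat{\mathcal{F}}^{(d)}$; since $S_d\subseteq S_d^{(d)}$, all dependence of the oracle on the period $s$ (carried by $f_d'$ and $\eta$ on $S_d$) lives inside $\hat{\mathcal{F}}^{(d)}$. Therefore the output distribution on $\ket{\psi_d}$ is independent of $s$, and since $s$ is uniformly random the probability of outputting it correctly is at most $1/2^n$; combining with the hybrid bound gives the claimed inequality. I expect the main obstacle to be verifying the hypotheses of Lemma~\ref{bfp} at every step, that is, maintaining the invariant that $\rho^{(l)}$ and $U_l$ are uncorrelated to $\hat{\mathcal{F}}^{(l-1)}$ and to the freshly revealed set $\overline{S}^{(l)}$. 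This rests on the fact stressed after Definition~\ref{f_l} that, conditioned on $\mathcal{F}^{(1)},\dots,\mathcal{F}^{(l-1)}$, the restrictions $\hat{f}_j^{(l-1)}$ remain uniformly distributed, so the shadows building $\rho^{(l)}$ depend only on the revealed shells $\mathcal{F}^{(1)},\dots,\mathcal{F}^{(l-1)}$ and never touch $S^{(l-1)}$; once this is checked the remaining estimates are routine, which is exactly why the argument parallels \cite{ccl20}.
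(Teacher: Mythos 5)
Your proposal is correct and follows essentially the same route as the paper, which obtains Theorem~\ref{lb_qncd} by the inductive hybrid/O2H argument of Theorem~6.1 in \cite{ccl20} with Lemma~\ref{O2H} and Lemma~\ref{bfp} substituted for their counterparts, the corollary then being immediate. The only piece you leave implicit is the decision half of the statement: the same hybrid bound controls the distinguishing advantage as well, because the fully shadowed circuit is independent of whether $f$ is a Simon's function or a one-to-one function.
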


\subsection{Lower bounds for $(\mathsf{BQNC}_d^{\mathsf{BPP}})^\mathcal{O}$}
Next, we establish the search $d$-$\mathsf{BSSP}$ is intractable by any $d$-QC scheme. Since each classical algorithms interspersed quantum layers can find polynomial-size number of points of each domains, a new procedure to define the hidden sets is needed. The same argument as in \cite{ccl20} gives the following result.

\begin{theorem}\label{th:lb_for_dqc}
    Let $n,d \in \mathbb{N}$. Let $\mathcal{A}$ be any $d$-QC scheme and $\rho$ be any initial state. Let $f$ be a random Simon's function from $\mathbb{Z}_2^n$ to $\mathbb{Z}_2^n$ with period $s$ and $\mathcal{F}_b$ be the $(d,f)$-bijective shuffling sampled from $\mathcal{O}_{\mathrm{unif}}^{f,d}$. Then
    \[
        \mathrm{Pr}[\mathcal{A}^{\mathcal{F}_b}(\rho)=s] \leq d \cdot \sqrt{\frac{\mathsf{poly(n)}}{2^n}}.
    \]
\end{theorem}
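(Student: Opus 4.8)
The plan is to follow the hybrid argument behind Theorem~\ref{lb_qncd} almost verbatim, the one genuinely new ingredient being a construction of the hidden sets that survives the classical queries interleaved between the quantum layers. Write the $d$-QC scheme as the $d$ quantum layers $U_1,\dots,U_d$, each followed by an oracle access and a partial measurement, with copies of the classical algorithm $\mathcal{A}_c$ (which may itself query $\mathcal{F}_b$) inserted before and after each layer. I would introduce hybrids $H_0,\dots,H_d$ in which $H_0$ is the honest execution against $\mathcal{F}_b$ and $H_l$ is obtained from $H_{l-1}$ by replacing the $l$-th oracle access with the shadow $\mathcal{G}$ of $\mathcal{F}_b$ in $\overline{S}^{(l)}$ (Definition~\ref{shadow}). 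Treating everything after the $l$-th oracle as the algorithm $\mathcal{A}$ and everything strictly before it as the state $U(\rho)$, Lemma~\ref{O2H} bounds the change in the probability of outputting any fixed $s$ by $\sqrt{2\,\mathrm{Pr}[\text{find }\overline{S}^{(l)}:U^{\mathcal{F}_b\setminus\overline{S}^{(l)}},\rho]}$. A telescoping triangle inequality over $l=1,\dots,d$ then reduces the theorem to (i) bounding each finding probability by $\mathrm{poly}(n)/2^n$ and (ii) showing that the final hybrid $H_d$ is independent of the period.

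The crux, and the step where the $d$-QC setting genuinely departs from Theorem~\ref{lb_qncd}, is constructing the hidden sets in the presence of the interleaved classical queries. Each copy of $\mathcal{A}_c$ is a classical polynomial-time algorithm, so across the whole run it queries $\mathcal{F}_b$ at only polynomially many classical points, exposing a polynomial-size subset $Q_j$ of each domain. I would therefore refine Definition~\ref{hiddenset} and choose the nested hidden sets $S_j^{(l)}\subseteq S_j^{(l-1)}$ level by level so that, in addition to $S_j\subseteq S_j^{(l)}$ and $|S_j^{(l)}|/|S_j^{(l-1)}|\le 1/2^n$, each $S_j^{(l)}$ is disjoint from every point of $Q_j$ revealed up to the $l$-th layer. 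Such a choice is possible because $S_j^{(l-1)}$ is exponentially large while $Q_j$ is only polynomial, so deleting the revealed points removes a negligible fraction and the three constraints remain simultaneously satisfiable with overwhelming probability over $\mathcal{F}_b$. The payoff of this disjointness is precisely the hypothesis of Lemma~\ref{bfp}: since the classical part never touches the hidden region and every earlier quantum layer has already been replaced by a shadow carrying no information about $\hat{\mathcal{F}}^{(l-1)}$, the unitary $U_l$ and the state entering the $l$-th layer are uncorrelated to $\hat{\mathcal{F}}^{(l-1)}$ and $\overline{S}^{(l)}$.

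With the hidden sets so constructed I would invoke Lemma~\ref{bfp} at each hybrid step with $p\le 1/2^n$ (the conditional probability that a parallel query lands in $S_i^{(l)}$) and $q=\mathrm{poly}(n)$ (the number of parallel queries a single layer performs), giving $\mathrm{Pr}[\text{find }\overline{S}^{(l)}]\le \mathrm{poly}(n)/2^n$ and hence a per-step distance of $\sqrt{\mathrm{poly}(n)/2^n}$. In the fully shadowed hybrid $H_d$ the entire computation---the shadowed quantum layers together with the hidden-set-avoiding classical queries---is independent of $\hat{\mathcal{F}}^{(d)}$, which is the only part of the oracle that encodes the period: $s$ is carried solely by $f_d'$ and $\eta$ on $S_d\subseteq S_d^{(d)}$, a region untouched by either the shadows or the classical queries. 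Consequently $H_d$ outputs $s$ with only negligible probability, and summing the $d$ per-step distances via the triangle inequality yields $\mathrm{Pr}[\mathcal{A}^{\mathcal{F}_b}(\rho)=s]\le d\cdot\sqrt{\mathrm{poly}(n)/2^n}$. I expect the adaptive hidden-set construction in step~(ii) to be the main obstacle, since the classical queries may depend adaptively on earlier measurement outcomes, so the existence of hidden sets dodging all of them while remaining consistent with the O2H and finding-probability machinery must be argued uniformly over the branching of the computation.
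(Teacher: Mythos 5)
Your proposal is correct and follows essentially the same route as the paper, which itself defers to the proof of Theorem 7.1 in Chia--Chung--Lai with Lemma \ref{O2H} and Lemma \ref{bfp} substituted for their counterparts: a layer-by-layer hybrid via the O2H lemma, the finding probability bounded by $\mathrm{poly}(n)/2^n$ per layer, and the final fully-shadowed hybrid uncorrelated with $\hat{\mathcal{F}}^{(d)}$ and hence with the period. You also correctly isolate the one genuinely new ingredient the paper flags for the $d$-QC setting, namely redefining the hidden sets so that they avoid the polynomially many points exposed by the interleaved classical queries.
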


\begin{corollary}
    The search $d$-$\mathsf{BSSP}$ can be solved with at most negligible probability by any $d$-QC scheme and the decision $d$-$\mathsf{BSSP}$ is not in $(\mathsf{BQNC}_d^{\mathsf{BPP}})^\mathcal{O}$
\end{corollary}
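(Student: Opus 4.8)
The plan is to mirror the hybrid argument of Theorem~7.1 in~\cite{ccl20}: replace each oracle $\mathcal{F}_b$ by the shadow $\mathcal{G}$ one quantum layer at a time, peeling off one level of the nested hidden sets per layer and invoking Lemma~\ref{O2H} at each replacement. I would define hybrid executions $H_0,\dots,H_d$ of the relativized $d$-QC scheme, where $H_0$ answers every quantum layer with $\mathcal{F}_b$ and, in passing from $H_{l-1}$ to $H_l$, the oracle attached to the $l$-th quantum layer $U_l$ is switched from $\mathcal{F}_b$ to its shadow in $\overline{S}^{(l)}$ (Definition~\ref{shadow}), so that by $H_d$ the innermost hidden part $\hat{\mathcal{F}}^{(d)}$ has been erased. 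Since only $f_d'$ and $\eta$ on $S_d$ encode $f$, and $\hat{\mathcal{F}}^{(d)}$ carries exactly this data, the fully shadowed execution $H_d$ is uncorrelated to $s$; hence $\Pr[H_d=s]\le 2^{-n}$, which is dominated by and can be absorbed into the target bound. It then remains to control the telescoping sum $\sum_{l=1}^d\bigl|\Pr[H_{l-1}=s]-\Pr[H_l=s]\bigr|$.

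For the $l$-th step I would package the $l$-th quantum layer $U_l$ (together with the adjacent classical processing already fixed by $H_{l-1}$) as the unitary $U$ and state $\rho$ of Lemma~\ref{O2H}, with $k=l$. The lemma bounds $\bigl|\Pr[H_{l-1}=s]-\Pr[H_l=s]\bigr|$ by $\sqrt{2\,\mathrm{Pr}[\text{find }\overline{S}^{(l)}:U^{\mathcal{F}_b\setminus\overline{S}^{(l)}},\rho]}$, and Lemma~\ref{bfp} bounds the finding probability by $p\cdot q$, where $q=\mathrm{poly}(n)$ is the number of parallel queries made by the layer and $p=\Pr[x\in S_i^{(l)}\mid x\in S_i^{(l-1)}]\le 2^{-n}$ by the size constraint of Definition~\ref{hiddenset}. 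Summing the $d$ terms with the triangle inequality then gives $\sum_{l}\sqrt{2\cdot\mathrm{poly}(n)/2^n}\le d\cdot\sqrt{\mathrm{poly}(n)/2^n}$ after rescaling the polynomial, which is the claimed bound.

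The genuinely new ingredient, and the step I expect to be the main obstacle, is that in the $d$-QC scheme the classical algorithms $\mathcal{A}_c$ interleaving the quantum layers may query the oracle \emph{classically} and thus learn $\mathrm{poly}(n)$ input--output pairs of $f_0,\dots,f_d',\zeta,\eta$. A classically learned point lying inside some $S_i^{(l)}$ would violate the hypothesis of Lemma~\ref{bfp} that $U$ and $\rho$ are uncorrelated to $\hat{\mathcal{F}}^{(l-1)}$ and $\overline{S}^{(l)}$, and would let the scheme detect the shadow. The fix is to construct the hidden sets adaptively against the classical transcript: when defining $S_i^{(l)}\subseteq S_i^{(l-1)}$ I would first expose all points already queried classically and then choose $S_i^{(l)}$ uniformly among the subsets that avoid every such point while still containing the true set $S_i$ and obeying $|S_i^{(l)}|/|S_i^{(l-1)}|\le 2^{-n}$ together with $f_{i-1}(S_{i-1}^{(l)})=S_i^{(l)}$. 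This is feasible because only polynomially many points must be excluded from exponentially large sets, so the size constraint survives; this is precisely the ``new procedure to define the hidden sets'' noted above and carried out in the QC-scheme analysis of~\cite{ccl20}. Once the hidden sets are made consistent with the classical queries in this way, every remaining estimate transfers verbatim from Section~\ref{4}.
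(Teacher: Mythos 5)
Your proposal matches the paper's approach: the corollary is stated as an immediate consequence of Theorem~\ref{th:lb_for_dqc}, whose proof (deferred to the argument of Theorem~7.1 in \cite{ccl20}) is exactly the layer-by-layer hybrid you describe, combining Lemma~\ref{O2H} and Lemma~\ref{bfp} with the adaptive redefinition of the hidden sets to avoid the polynomially many classically queried points. The only piece you leave implicit is the decision half of the corollary, which follows from the same telescoping because Lemma~\ref{O2H} bounds the deviation for an arbitrary output string $t$ and the fully shadowed execution is distributed identically whether $f$ is a Simon's function or a one-to-one function.
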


\subsection{Lower bounds for $(\mathsf{BPP}^{\mathsf{BQNC}_d})^\mathcal{O}$}

Finally, we establish the search $d$-$\mathsf{BSSP}$ is hard for any $d$-CQ scheme. The main difficulty lies in that conditioned on measurement results of quantum circuits, the distribution of the permutations might be not uniform enough to prove the hardness in a similar way. Chia et al. resolved the problem by approximating it by a convex combination of ``almost'' uniform shufflings. The same argument as in \cite{ccl20} gives the following result.

\begin{theorem}\label{th:lb_for_dcq}
    Let $n,d \in \mathbb{N}$. Let $\mathcal{A}$ be any $d$-CQ scheme. Let $f$ be a random Simon's function from $\mathbb{Z}_2^n$ to $\mathbb{Z}_2^n$ with period $s$ and $\mathcal{F}_b$ be the $(d,f)$-bijective shuffling sampled from $\mathcal{O}_{\mathrm{unif}}^{f,d}$. Then
    \[
        \mathrm{Pr}[\mathcal{A}^{\mathcal{F}_b}()=s] \leq d \cdot \sqrt{\frac{\mathsf{poly(n)}}{2^n}}.
    \]
\end{theorem}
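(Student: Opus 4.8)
The plan is to run the same inductive Oneway-to-Hiding argument that underlies Theorem~\ref{lb_qncd} and Theorem~\ref{th:lb_for_dqc}, but now bookkeeping the fact that a $d$-CQ scheme interleaves $m=\mathrm{poly}(n)$ independent invocations of a $d$-depth oracle circuit $U_{d+1}\mathcal{F}_b U_d \cdots \mathcal{F}_b U_1$ with classical algorithms $\mathcal{A}_{c,i}$ that may themselves query $\mathcal{F}_b$ classically. First I would fix the hidden sets $\mathbf{S}=(\overline{S}^{(0)},\ldots,\overline{S}^{(d)})$ of Definition~\ref{hiddenset}, so that the levels form a nested filtration with $S_j \subseteq S_j^{(d)}$ and $|S_j^{(l)}|/|S_j^{(l-1)}| \leq 1/2^n$. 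The period $s$ is encoded only in $\hat{\mathcal{F}}^{(d)}$, namely in $f_d'$ and $\eta$ on the deepest set $S_d$; hence the target is to show that the whole computation is statistically indistinguishable from one run against the shadow $\mathcal{G}$, which by construction contains no information about $\hat{\mathcal{F}}^{(d)}$ and therefore none about $s$.

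For one quantum invocation I would apply Lemma~\ref{O2H} inductively across the $d$ levels exactly as in Theorem~\ref{lb_qncd}: swapping $\mathcal{F}_b$ for its shadow at level $k$ costs a Bures distance at most $\sqrt{2\,\mathrm{Pr}[\text{find }\overline{S}^{(k)}]}$, and Lemma~\ref{bfp} bounds each finding probability by $p\cdot q \leq q/2^n$ with $q=\mathrm{poly}(n)$ the number of parallel queries. The intervening classical steps cause no difficulty in isolation: each $\mathcal{A}_{c,i}$ makes at most $\mathrm{poly}(n)$ point queries, and since every hidden level has density at most $1/2^n$, the chance that a classical query lands in some $\overline{S}^{(k)}$ is negligible and folds into the $\mathrm{poly}(n)$ under the square root.

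The hard part is the interleaving. Once a measurement outcome of a quantum circuit has been observed and the classical queries have been answered, the posterior distribution of $\mathcal{F}_b$, and in particular of the hidden part $\hat{\mathcal{F}}^{(k-1)}$, is no longer uniform; but Lemma~\ref{bfp} requires the state and unitary fed into it to be uncorrelated to $\hat{\mathcal{F}}^{(k-1)}$ and $\overline{S}^{(k)}$. I would resolve this precisely as Chia et al.\ do for their Theorem~8.1, by approximating the conditional distribution on shufflings by a convex combination of ``almost uniform'' shufflings on each of which the conditional density $\mathrm{Pr}[x\in S_i^{(k)} \mid x\in S_i^{(k-1)}]$ is still bounded by essentially $1/2^n$, so that Lemma~\ref{bfp} applies on every component and the approximation contributes only a further negligible additive error. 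Because each classical algorithm reveals at most $\mathrm{poly}(n)$ points and there are $m=\mathrm{poly}(n)$ rounds, only polynomially many points can ever distort the posterior, so this ``almost uniform'' approximation stays valid throughout the run.

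Finally I would chain the estimates round by round: using the preliminaries' bound that the Bures distance dominates the gap between output distributions, I replace $\mathcal{F}_b$ by $\mathcal{G}$ in every quantum invocation and in every classical query, accumulating the per-step errors by the triangle inequality. Against the shadow the output carries no information about $s$, so the scheme recovers the period only by chance. Absorbing the $m$ rounds and all query counts into $\mathrm{poly}(n)$ and summing over the $d$ hidden-set levels gives the claimed bound $d\cdot\sqrt{\mathrm{poly}(n)/2^n}$. I expect the convex-combination control of the non-uniform posterior to be the main obstacle, since it is the one place where the CQ analysis genuinely departs from the cleaner $\mathsf{QNC}_d$ and $d$-QC arguments.
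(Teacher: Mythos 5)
Your proposal follows essentially the same route as the paper: the paper's proof of this theorem is precisely to rerun the argument of Theorem~8.1 of Chia, Chung and Lai, substituting Lemma~\ref{O2H} and Lemma~\ref{bfp} for their Lemmas~5.7 and~5.8, and you correctly identify both the inductive O2H/hidden-set structure and the key obstacle — the non-uniform posterior on shufflings after measurements — together with its resolution via a convex combination of almost-uniform shufflings. No gaps; this matches the paper's intended argument.
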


\begin{corollary}
    The search $d$-$\mathsf{BSSP}$ can be solved with at most negligible probability by any $d$-CQ scheme and the decision $d$-$\mathsf{BSSP}$ is not in $(\mathsf{BPP}^{\mathsf{BQNC}_d})^\mathcal{O}$.
\end{corollary}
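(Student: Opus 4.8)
The plan is to treat the two assertions separately, since the hard analytic work is already encapsulated in Theorem~\ref{th:lb_for_dcq}. The search lower bound is immediate: for any $d$-CQ scheme $\mathcal{A}$ we have $\mathrm{Pr}[\mathcal{A}^{\mathcal{F}_b}()=s] \leq d \cdot \sqrt{\mathsf{poly}(n)/2^n}$, and since the depth parameter $d$ (fixed, or at most $\mathsf{poly}(n)$) and the number of queries are polynomially bounded, this quantity is $2^{-\Omega(n)}$, hence negligible. So no $d$-CQ scheme recovers the period $s$ with non-negligible probability.

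For the decision assertion, the plan is to show that the entire output distribution of any $d$-CQ scheme is negligibly close in the two cases of the promise (random Simon's function versus random one-to-one function), so that no scheme can separate the acceptance probabilities by the required $2/3$ versus $1/3$ gap. First I would fix an arbitrary $d$-CQ scheme $\mathcal{A}$ and, following the construction used to prove Theorem~\ref{th:lb_for_dcq}, build the hidden sets $\mathbf{S}$ and the associated shadow oracle $\mathcal{G}$, now treating both cases in parallel. The point to exploit is the remark following Definition~\ref{shadow}: the shadow $\mathcal{G}$ in $\overline{S}^{(l)}$ contains no information about $\hat{\mathcal{F}}^{(l)}$, and in particular none about the behaviour of $f_d'$ and $\eta$ on $S_d$. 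Since $S_d \subseteq S_d^{(l)}$ and the only difference between the two cases lives precisely on $S_d$ (through $f_d'$ and $\eta$), while $f_0,\dots,f_{l-1}$ are distributed independently of $f$, the induced distribution on $\mathcal{G}$ is identical whether $f$ is a Simon's function or a one-to-one function.

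The key steps are then as follows. Applying Lemma~\ref{O2H} together with the finding-probability bound of Lemma~\ref{bfp} along the same inductive chain over the layers $U_1,\dots,U_d$ and the interleaved classical algorithms as in the proof of Theorem~\ref{th:lb_for_dcq}, I would bound, in both cases, the Bures distance between the output distribution of $\mathcal{A}$ with the real oracle $\mathcal{F}_b$ and its output distribution with the shadow $\mathcal{G}$ by $d \cdot \sqrt{\mathsf{poly}(n)/2^n}$; by the Claim in Section~2.1 this upper-bounds the corresponding gap between acceptance probabilities. Because the two shadow distributions coincide, the triangle inequality gives that the real-oracle output distributions in the Simon and one-to-one cases differ by at most $2d \cdot \sqrt{\mathsf{poly}(n)/2^n}$, which is negligible. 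Consequently $\lvert \mathrm{Pr}[\mathcal{A}\text{ accepts}\mid\text{Simon}] - \mathrm{Pr}[\mathcal{A}\text{ accepts}\mid\text{one-to-one}]\rvert$ is negligible, contradicting the separation required by the definition of $(\mathsf{BPP}^{\mathsf{BQNC}_d})^\mathcal{O}$; hence the decision $d$-$\mathsf{BSSP}$ is not in this class.

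The part I expect to be the main obstacle is verifying that the O2H/finding-probability chain of Theorem~\ref{th:lb_for_dcq} transfers verbatim to the one-to-one case. The hidden-set construction and the uncorrelation hypotheses of Lemma~\ref{bfp} were arranged with a Simon's function in mind; for a one-to-one $f$ one must re-check that $\eta$ on $S_d$ (now uniformly random rather than antisymmetric under the period) still leaves $\hat{f}_j^{(l)}$ uniformly distributed conditioned on the revealed parts, so that the counting argument after Definition~\ref{f_l} and the orthogonality used before Lemma~\ref{bfp} remain valid. The other delicate point, already handled in~\cite{ccl20} for the $d$-CQ scheme, is that the interleaved classical algorithms can query polynomially many domain points, so the hidden sets must be defined after conditioning on these classical queries and the convex-combination approximation by ``almost'' uniform shufflings must be invoked; I would import that step unchanged and only confirm that it is insensitive to the Simon versus one-to-one distinction.
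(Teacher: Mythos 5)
Your proposal is correct and follows essentially the same route the paper intends: the search statement is read off directly from Theorem~\ref{th:lb_for_dcq}, and the decision statement uses the standard O2H/shadow indistinguishability argument imported from \cite{ccl20}, exploiting that the shadow distribution is identical in the Simon and one-to-one cases because only $f_d'$ and $\eta$ on $S_d$ carry information about $f$. The paper leaves this elaboration implicit (deferring to the corresponding argument in \cite{ccl20}), and your reconstruction, including the caveat about re-checking the uniformity of $\hat{f}_j^{(l)}$ in the one-to-one case, matches what the definitions of the bijective shuffling were arranged to guarantee.
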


\bibliography{lipics-v2021-sample-article}

\begin{thebibliography}{10}

\bibitem{listqp}
\url{https://en.wikipedia.org/wiki/List_of_quantum_processors}.

\bibitem{aar02}
Scott Aaronson.
\newblock Quantum lower bound for the collision problem.
\newblock In {\em Proceedings of the 34th ACM symposium on Theory of computing
  (STOC 2002)}, pages 635--642, 2002.
\newblock \href {https://doi.org/10.1145/509907.509999}
  {\path{doi:10.1145/509907.509999}}.

\bibitem{aar05}
Scott Aaronson.
\newblock Ten semi-grand challenges for quantum computing theory, 2005.
\newblock URL: \url{https://www.scottaaronson.com/writings/qchallenge.html}.

\bibitem{aar10}
Scott Aaronson.
\newblock {BQP} and the polynomial hierarchy.
\newblock In {\em Proceedings of the 42nd ACM symposium on Theory of computing
  (STOC 2010)}, pages 141--150, 2010.
\newblock \href {https://doi.org/10.1145/1806689.1806711}
  {\path{doi:10.1145/1806689.1806711}}.

\bibitem{aar11}
Scott Aaronson.
\newblock Projects aplenty, 2011.
\newblock URL: \url{https://scottaaronson.blog/?p=663}.

\bibitem{ahu18}
Andris Ambainis, Mike Hamburg, and Dominique Unruh.
\newblock Quantum security proofs using semi-classical oracles.
\newblock In {\em Advances in Cryptology – CRYPTO 2019}, volume 11693 of {\em
  Lecture Notes in Computer Science}, pages 269--295. Springer, 2019.
\newblock \href {https://doi.org/10.1007/978-3-030-26951-7_10}
  {\path{doi:10.1007/978-3-030-26951-7_10}}.

\bibitem{ags22}
Atul~Singh Arora, Alexandru Gheorghiu, and Uttam Singh.
\newblock Oracle separations of hybrid quantum-classical circuits.
\newblock {\em arXiv:2201.01904}, 2022.
\newblock \href {https://doi.org/10.48550/arXiv.2201.01904}
  {\path{doi:10.48550/arXiv.2201.01904}}.

\bibitem{aab+19}
Frank Arute et~al.
\newblock Quantum supremacy using a programmable superconducting processor.
\newblock {\em Nature}, 574(7779):505--510, 2019.
\newblock \href {https://doi.org/10.1038/s41586-019-1666-5}
  {\path{doi:10.1038/s41586-019-1666-5}}.

\bibitem{cab+21}
Marco Cerezo et~al.
\newblock Variational quantum algorithms.
\newblock {\em Nature Reviews Physics}, 3(9):625--644, 2021.
\newblock \href {https://doi.org/10.1038/s42254-021-00348-9}
  {\path{doi:10.1038/s42254-021-00348-9}}.

\bibitem{ccl20}
Nai-Hui Chia, Kai-Min Chung, and Ching-Yi Lai.
\newblock On the need for large quantum depth.
\newblock In {\em Proceedings of the 52nd ACM Symposium on Theory of Computing
  (STOC 2020)}, pages 902--915, 2020.
\newblock \href {https://doi.org/10.1145/3357713.3384291}
  {\path{doi:10.1145/3357713.3384291}}.

\bibitem{ch22}
Nai-Hui Chia and Shih-Han Hung.
\newblock Classical verification of quantum depth.
\newblock {\em arXiv:2205.04656}, 2022.
\newblock \href {https://doi.org/10.48550/arXiv.2205.04656}
  {\path{doi:10.48550/arXiv.2205.04656}}.

\bibitem{ccd+03}
Andrew~M Childs, Richard Cleve, Enrico Deotto, Edward Farhi, Sam Gutmann, and
  Daniel~A Spielman.
\newblock Exponential algorithmic speedup by a quantum walk.
\newblock In {\em Proceedings of the 35th ACM symposium on Theory of computing
  (STOC 2003)}, pages 59--68, 2003.
\newblock \href {https://doi.org/10.1145/780542.780552}
  {\path{doi:10.1145/780542.780552}}.

\bibitem{cw00}
Richard Cleve and John Watrous.
\newblock Fast parallel circuits for the quantum {Fourier} transform.
\newblock In {\em Proceedings 41st IEEE Annual Symposium on Foundations of
  Computer Science (FOCS 2000)}, pages 526--536, 2000.
\newblock \href {https://doi.org/10.1109/SFCS.2000.892140}
  {\path{doi:10.1109/SFCS.2000.892140}}.

\bibitem{cm20}
Matthew Coudron and Sanketh Menda.
\newblock Computations with greater quantum depth are strictly more powerful
  (relative to an oracle).
\newblock In {\em Proceedings of the 52nd ACM Symposium on Theory of Computing
  (STOC 2020)}, pages 889--901, 2020.
\newblock \href {https://doi.org/10.1145/3357713.3384269}
  {\path{doi:10.1145/3357713.3384269}}.

\bibitem{fk18}
Bill Fefferman and Shelby Kimmel.
\newblock Quantum vs. classical proofs and subset verification.
\newblock In {\em Proceedings of 43rd International Symposium on Mathematical
  Foundations of Computer Science (MFCS 2018)}, pages 22:1--22:23, 2018.
\newblock \href {https://doi.org/10.4230/LIPIcs.MFCS.2018.22}
  {\path{doi:10.4230/LIPIcs.MFCS.2018.22}}.

\bibitem{fz03}
Stephen~A Fenner and Yong Zhang.
\newblock A note on the classical lower bound for a quantum walk algorithm.
\newblock {\em arXiv:quant-ph/0312230}, 2003.
\newblock \href {https://doi.org/10.48550/arXiv.quant-ph/0312230}
  {\path{doi:10.48550/arXiv.quant-ph/0312230}}.

\bibitem{joz05}
Richard Jozsa.
\newblock An introduction to measurement based quantum computation.
\newblock {\em arXiv:quant-ph/0508124}, 2005.
\newblock \href {https://doi.org/10.48550/arXiv.quant-ph/0508124}
  {\path{doi:10.48550/arXiv.quant-ph/0508124}}.

\bibitem{nc10}
Michael~A. Nielsen and Isaac~L. Chuang.
\newblock {\em Quantum Computation and Quantum Information: 10th Anniversary
  Edition}.
\newblock Cambridge University Press, 2010.
\newblock \href {https://doi.org/10.1017/CBO9780511976667}
  {\path{doi:10.1017/CBO9780511976667}}.

\bibitem{sim97}
Daniel~R. Simon.
\newblock On the power of quantum computation.
\newblock In {\em Proceedings 35th IEEE Annual Symposium on Foundations of
  Computer Science (FOCS 1994)}, pages 116--123, 1994.
\newblock \href {https://doi.org/10.1109/SFCS.1994.365701}
  {\path{doi:10.1109/SFCS.1994.365701}}.

\bibitem{unr15}
Dominique Unruh.
\newblock Revocable quantum timed-release encryption.
\newblock {\em Journal of the ACM}, 62(6):1--76, 2015.
\newblock \href {https://doi.org/10.1145/2817206} {\path{doi:10.1145/2817206}}.

\bibitem{zhong2020quantum}
Han-Sen Zhong et~al.
\newblock Quantum computational advantage using photons.
\newblock {\em Science}, 370:1460--1463, 2020.
\newblock \href {https://doi.org/10.1126/science.abe8770}
  {\path{doi:10.1126/science.abe8770}}.

\end{thebibliography}

\end{document}